\pgfplotsset{compat=newest}
\tikzset{
  curarrow/.style={
  rounded corners=8pt,
  execute at begin to={every node/.style={fill=red}},
    to path={-- ([xshift=50pt]\tikztostart.center)
    |- (#1) node[fill=white] {$\scriptstyle \delta^*$}
    -| ([xshift=-50pt]\tikztotarget.center)
    -- (\tikztotarget)}
    }
}
\newtheorem{theorem}{Theorem}[section]
\newtheorem*{theorem*}{Theorem}
\newtheorem{lemma}[theorem]{Lemma}
\title{Topology Change from Pointlike Sources}
\author{Yasha Neiman\thanks{Email: yaakov.neiman@oist.jp} ~and David O'Connell\thanks{Email: david.oconnell@oist.jp}}
\date{\small{\textit{Okinawa Institute of Science and Technology,}} \\
\textit{1919-1 Tancha, Onna-son, Okinawa 904-0495, Japan} \\ \vspace{2mm}
\large{\today }}
\begin{document}
\maketitle 



\begin{abstract}
In this paper we study topology-changing spacetimes occurring from pointlike sources. Following an old idea of Penrose, we will opt for a non-Hausdorff model of topology change in which an initial pointlike source is ``doubled" and allowed to propagate along null rays into an eventual cobordism. By appealing to recent developments in non-Hausdorff differential geometry, we will describe and evaluate gravitational actions on these topology-changing spacetimes. Motivated by analogous results for the Trousers space, we describe a sign convention for Lorentzian angles that will ensure the dampening of our non-Hausdorff topology-changing spacetimes within a two-dimensional path integral for gravity. 
\end{abstract}

\newpage

\tableofcontents

\newpage

\section{Introduction}

In this paper we will concern ourselves with topology changing spacetimes and their transition amplitudes within a path integral for gravity. We consider perhaps the simplest non-trivial setting, that is, a transition from one circle into two.\footnote{Our reason for this will become clear throughout the paper: we will consider some unconventional types of topology change, and their novelty should already manifest in the two-dimensional regime where gravity is famously topological.} The customary model for this type of topology change is the so-called \textit{trousers space}, which is a smooth two-dimensional manifold that traces out the splitting process, as pictured in Figure \ref{FIG: trousers space}. The trousers space has long served as the prototypical example of topology change, and has been discussed in various physical contexts \cite{louko1997complex, sorkin1997forks, dowker2003topology, dowker1998handlebody, Dowker:1997hj, harris1990causal, de2022frontiers, borde1999causal, hartle1998generalized, sorkin1989consequences, dray1991particle, anderson1986does, manogue1988trousers, buck2017sorkin, witten2022note, feng2024singularity, Sorkin:1985bh}.  \\ 

Despite a broad discussion of topology change within physics, there exists an interesting gap in the literature that has remained unexplored for over half a century. When discussing time-asymmetry in \cite{penrose1979singularities}, Penrose sketches a particular type of topology-change that is markedly different from the trousers space. In his image, the manifold does not change its topology at a single point in time, but at a single point in \textit{spacetime}. This means that a single point changes its topology from one connected component to two, and then this change is allowed to grow along null rays. If taken within a compact universe, this will develop into a full topology change of spacelike slices within finite time, and thus may be used to model a possible transition between $S^1$ and $S^1 \sqcup S^1$. 
\\

Although a potentially interesting model for topology change, Penrose correctly identifies an important technical issue in his spacetimes: if one wants the pointlike splitting to remain a manifold, then models such as those pictured in Figure \ref{FIG: Penrose spacetime} are necessarily non-Hausdorff. At that time it was not clear how expressive a non-Hausdorff differential geometry could be, and thus after musing for some time and deviating wildly from the original scope of his paper, he finishes his discussion with the now-famous quote:\footnote{We have decided to include it in full, since the latter half of this quote is often misquoted as an argument against non-Hausdorff manifolds.}

\begin{quote}
    I have, in any case, strayed far too long from my avowed conventionality in this discussion, and no new insights as to the origin of time-asymmetry have, in any case, been obtained. I must therefore return firmly to sanity by repeating to myself three times: ‘spacetime is a \textit{Hausdorff} differentiable manifold; spacetime is a \textit{Hausdorff}...’
\end{quote}
In the past half century a theory of non-Hausdorff manifolds has emerged \cite{o2023nonHausAdj, o2023vectorbun, o2023deRham, buss2012non, crainic1999remark, francis2023h}, and these discussions have occasionally extended into non-Hausdorff spacetimes \cite{hajicek1970extensions, hajicek1971causality, luc2020interpreting, mccabe2005topology, heller2011geometry, muller2013generalized}. In this paper we will leverage recent mathematical developments in order to take seriously Penrose's splitting spacetimes. As a guiding reference, we will mimic the standard analysis of the trousers space found in \cite{louko1997complex, sorkin2019lorentzian} by defining and evaluating the gravitational action for a non-Hausdorff version of the trousers space. We will then compare transition amplitudes for the Hausdorff and non-Hausdorff trousers spaces, leading to the eventual conclusion that the latter admits a similar imaginary-strength action that yields a dampening  in the resulting path integral. \\

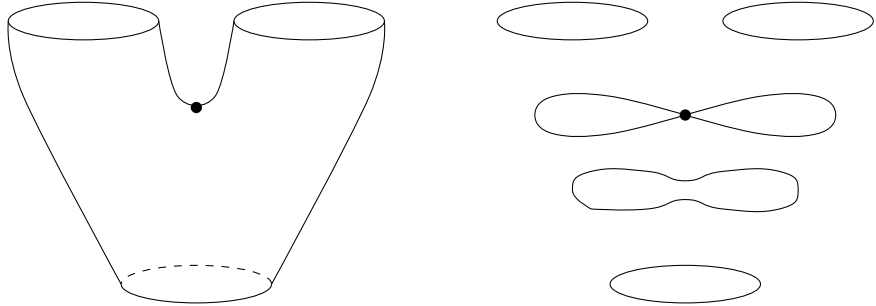
\begin{figure}
    \centering
    \begin{tikzpicture}[scale=0.5]

\draw (-3,7) ellipse (20mm and 5mm);
\draw (3,7) ellipse (20mm and 5mm);

\draw (2,0) arc
	[start angle=360,	end angle=180, x radius=20mm, y radius =5mm] ;
\draw[dashed] (2,0) arc
	[start angle=0,	end angle=180, x radius=20mm, y radius =5mm] ;

\draw[] plot[smooth, tension=0.6] coordinates{(-5,7) (-4.6,5) (-2,0)};
\draw[] plot[smooth, tension=0.6] coordinates{(5,7) (4.6,5) (2,0)};

\draw[] plot[smooth, tension=0.6] coordinates{(-1,7) (-0.5,5) (0.5,5) (1,7)};

\fill[] (0,4.7) circle (1.5mm);


\draw (10,7) ellipse (20mm and 5mm);
\draw (16,7) ellipse (20mm and 5mm);

\draw[] plot[smooth, tension=0.7] coordinates{(13,4.5) (15, 5) (16.5,5) (17,4.5) (16.5,4) (15,4) (13,4.5)};
\draw[] plot[smooth, tension=0.7] coordinates{(13,4.5) (11, 5) (9.5,5) (9,4.5) (9.5,4) (11,4) (13,4.5)};
\fill[] (13,4.5) circle (1.5mm);

\draw[] plot[smooth, tension=0.9] coordinates{(10.5,2) (12,2) (13,2.25) (14,2) (15.5, 2) (16, 2.5) (15.5, 3) (14,3) (13, 2.75) (12, 3) (10.5,3) (10,2.5) (10.5,2)  };

\draw (13,0) ellipse (20mm and 5mm);
    
    \end{tikzpicture}
    \caption{The trousers space, together with a sample of its spacelike slices. } 
    \label{FIG: trousers space}  
\end{figure}

In the remainder of this introduction we will briefly review some details of Hausdorff topology change and provide an informal discussion of non-Hausdorff topology, before outlining the paper in detail.

\subsection{Topology Change and Path Integrals}
An $n$-dimensional spacetime $M$ with boundary $\partial M = \Sigma_1 \sqcup \Sigma_2$ exhibits \textit{topology change} whenever the initial boundary $\Sigma_1$ and the final boundary $\Sigma_2$ are not homeomorphic. In such a model, the initial data $\Sigma_1$ is assumed to smoothly evolve and change its topology through time. Mathematically speaking, such interpolating manifolds are known as \textit{cobordisms}, and have been studied extensively in the literature.  \\

Following \cite{hawking1978quantum, dowker2003topology, sorkin1997forks}, in a naive sum-over-histories approach to quantum gravity we consider a path integral whose sum may include multiple distinct geometries and topologies. For topology change in Lorentzian signature, the transition amplitude between a pair of non-homeomorphic spacelike hypersurfaces $(\Sigma_1, h_1)$ and $(\Sigma_2, h_2)$ could be represented symbolically as 
    \begin{equation}\label{EQ: schematic Lorentzian path integral}
      \langle \Sigma_1, h_1 | \Sigma_2, h_2 \rangle =   \sum_M \int \mathcal{D}[g]\exp\{ i\mathcal{S}(M, g) \}, \footnote{Here we assume $\hbar= 1$.}
    \end{equation}
where here we sum over all physically-reasonable Lorentz cobordisms interpolating between $\Sigma_1$ and $\Sigma_2$. With a naive prescription such as the above, we are met with an immediate principled question: which interpolating manifolds should we sum over in the domain of the path integral? \\

\begin{figure}
    \centering
 \begin{tikzpicture}[scale=0.8]

\fill[top color=white, bottom color=gray!50, shading angle=45] (2.5,-0.5)--(0,2.5)--(-0.5,2.6)--(-0.5,-2.4)--(5.5,-3.6)--(5.5,1.4)--(5,1.5)--(2.5,-0.5);

\fill[top color=white, bottom color=gray!50, shading angle=45] plot[smooth, tension=0.8] coordinates {(0, 2.5) (0.25, 2.42) (1,2.15) (2,1.9) (3, 1.7) (4, 1.6) (5,1.5) } -- (2.5,-0.5)--(0,2.5);

\fill[top color=white, bottom color=gray!50, shading angle=45] plot[smooth, tension=0.8] coordinates {(0, 2.5) (0.25, 2.42) (1,2.15) (2,1.9) (3, 1.7) (4, 1.6) (5,1.5) } -- plot[smooth, tension=0.8] coordinates { (5,1.5) (4.45, 1.68) (3.8,1.9) (3,2.1) (2,2.3)  (1,2.4) (0.25, 2.475) (0, 2.5)  };

\draw[->] (-0.5,0.1)--(5.475,-1.09);
\draw[] (2.5,-3)--(2.5,-0.5);
\draw[] (0,2.5)--(-0.5,2.6)--(-0.5,-2.4)--(5.5,-3.6)--(5.5,1.4)--(5,1.5);

\draw[] (0,-2.5)--(2.5,-0.5)--(5,1.5);
\draw[] (5,-3.5)--(2.5,-0.5)--(0,2.5);

\draw[] plot[smooth, tension=0.8] coordinates {(0, 2.5) (0.25, 2.42) (1,2.15) (2,1.9) (3, 1.7) (4, 1.6) (5,1.5) };
\draw[] plot[smooth, tension=0.8] coordinates { (0, 2.5) (0.25, 2.475) (1,2.4) (2,2.3) (3,2.1) (3.8,1.9) (4.45, 1.68) (5,1.5) };

\draw[->] (2.5,-0.5)--(2.25,1.8);
\draw[opacity=0.3] (2.5,-0.5)--(2.6,1.77);
\draw[<-] (2.62,2.15)--(2.6,1.77);

\fill[](2.5,-0.5) circle[radius=0.05cm];

\draw[] (9.5,1.5)--(10.4,1.5);
\draw[] (13.6,1.5)--(14.5,1.5);
\fill[](10.5,1.6) circle[radius=0.05cm];
\fill[](10.5,1.4) circle[radius=0.05cm];
\fill[](13.5,1.6) circle[radius=0.05cm];
\fill[](13.5,1.4) circle[radius=0.05cm];
\draw[] (10.5,1.6)--(13.5, 1.6);
\draw[] (10.5,1.4)--(13.5, 1.4);

\draw[] (9.5,0.5)--(11.15,0.5);
\draw[] (12.85,0.5)--(14.5,0.5);
\fill[](11.25,0.6) circle[radius=0.05cm];
\fill[](11.25,0.4) circle[radius=0.05cm];
\fill[](12.75,0.6) circle[radius=0.05cm];
\fill[](12.75,0.4) circle[radius=0.05cm];
\draw[] (11.25,0.6)--(12.75,0.6);
\draw[] (11.25,0.4)--(12.75,0.4);

\draw[] (12.1,-0.5)--(14.5,-0.5);
\draw[] (9.5,-0.5)--(11.9,-0.5);
\fill[](12,-0.4) circle[radius=0.05cm];
\fill[](12,-0.6) circle[radius=0.05cm];

\draw[] (9.5,-1.5)--(14.5,-1.5);
\draw[] (9.5,-2.5)--(14.5,-2.5);

\draw[->] (16,-1.5)--(16,1.5);
\node[] at (16.3,1.5) {$t$};
\end{tikzpicture}
    \caption{A simplification of Penrose's non-Hausdorff topology changing spacetime (left), together with a sample of its level sets.
    Although impossible to depict in ordinary Euclidean space, here there are two copies of the origin superimposed on top of each other, together with two copies of the future null cone.}
    \label{FIG: Penrose spacetime}
\end{figure}
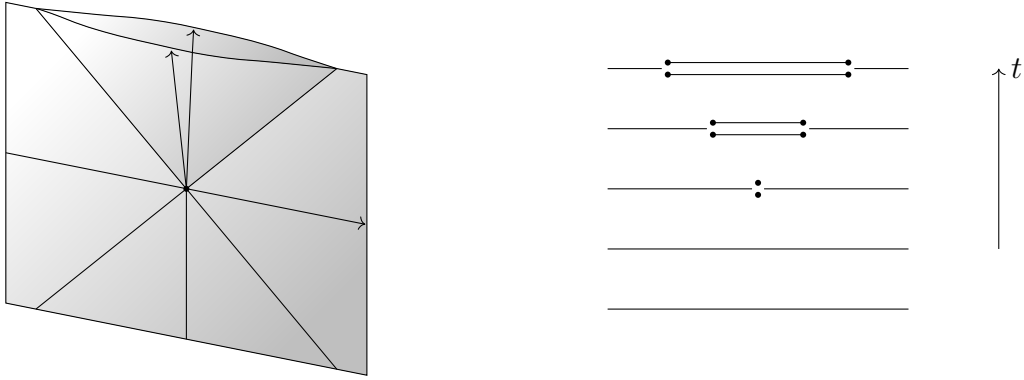

Before considering such an question, it makes sense to determine whether or not the sum is non-empty in the first place. In general dimensions, there are several known topological obstructions to the existence of Lorentz cobordisms, which are usually articulated as relationships between $\Sigma_1$ and $\Sigma_2$. It is well-known that any pair of $(n-1)$-dimensional manifolds will permit a smooth $n$-dimensional cobordism provided that they are related by a procedure known as \textit{Morse surgery} \cite{milnor2015lectures, reinhart1963cobordism}. Moreover, any smooth compact manifold admits a well-defined Lorentzian metric provided that it admits a globally non-vanishing vector field \cite{o1983semi}. This pair of observations may be taken in combination to explicitly describe Lorentz cobordisms via Morse theory \cite{yodzis1972lorentz, yodzis1973lorentz, alty1995building, dowker1998handlebody, Dowker:1997hj}. Note that in particular, Lorentz cobordisms exist between any pair of manifolds with $dim(\Sigma_i)\leq 3$. \\

After verifying the existence of Lorentz cobordisms, the next step might be to impose some top-down causal desiderata for the types of spacetimes we are willing to summing over. The exact specifications of this causal behaviour are still subject to debate, but at the very least it seems as though our spacetimes ought to admit a global time function, since this is implicitly used in the formation of \eqref{EQ: schematic Lorentzian path integral}. The existence of global time functions is known to be equivalent to the property of \textit{stable causality} \cite{hawking2023large}. Roughly put, a spacetime is stably causal provided it contains no closed timelike curves, and neither does any small perturbation of its metric. Such spacetimes lie relatively high up in the causal hierarchy of \cite{minguzzi2008causal, minguzzi2019lorentzian}. \\

At the apex of this causal hierarchy are the \textit{globally hyperbolic} spacetimes. These spacetimes are just about as causally well-behaved as possible, in that they always admit Cauchy surfaces. However, it can be shown that any globally hyperbolic spacetime is necessarily cylindrical, in that both its topological and smooth manifold structure are isomorphic (in the appropriate sense) to a product of the Cauchy surface with either the real line or the unit interval \cite{geroch1970domain, bernal2003smooth}. As such, we see that topology-changing spacetimes may never be globally-hyperbolic. Nonetheless, it appears as though we may be able to include topology-changing spacetimes within \eqref{EQ: schematic Lorentzian path integral} if relax our causality requirements slightly and allow for stably-causal spacetimes. However, we are then met with the following result of Geroch (adapted from \cite{geroch1967topology}). 
\begin{theorem*}[Geroch]\label{THM: geroch CTCs}
    Let $M$ be a compact spacetime with initial boundary $\Sigma_1$ and final boundary $\Sigma_2$. If $\Sigma_1$ and $\Sigma_2$ are not homeomorphic, then $M$ admits a closed timelike curve. 
\end{theorem*}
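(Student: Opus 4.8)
The plan is to prove the contrapositive: assuming that $M$ contains no closed timelike curve, I will construct a diffeomorphism (hence a homeomorphism) $\Sigma_1 \to \Sigma_2$. Since $M$ is a spacetime it is time-orientable, so it carries a smooth future-directed timelike vector field $V$. Because $\Sigma_1$ and $\Sigma_2$ are spacelike, their tangent spaces are spacelike and $V$ can never be tangent to them; thus $V$ is automatically transverse to $\partial M$, pointing into $M$ along the initial boundary $\Sigma_1$ and out of $M$ along the final boundary $\Sigma_2$. Extending $V$ to a slightly larger open manifold (a collar) so that its flow $\phi_s$ is well defined, I would aim to show that flowing $\Sigma_1$ forward along $V$ sweeps out all of $M$ and terminates precisely on $\Sigma_2$, exhibiting $M$ as a trivial product cobordism.

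The heart of the matter is to show that, under the no-CTC hypothesis, every integral curve of $V$ meets $\Sigma_1$ exactly once in its past and $\Sigma_2$ exactly once in its future. The boundary behaviour of $V$ already forbids a forward orbit from exiting through $\Sigma_1$ or a backward orbit from exiting through $\Sigma_2$, so the only alternative to reaching $\Sigma_2$ is a forward orbit that remains in the compact set $M$ for all parameter values, i.e.\ a \emph{trapped} orbit. I claim a trapped orbit forces a closed timelike curve. Indeed, its $\omega$-limit set $\Omega$ is nonempty, compact and flow-invariant; invariance together with the boundary behaviour of $V$ places $\Omega$ in the interior of $M$. For each $r \in \Omega$ one has $\phi_{-1}(r) \in \Omega$ and $r \in I^+(\phi_{-1}(r))$, so the open sets $\{ I^+(r) : r \in \Omega \}$ cover $\Omega$; extracting a finite subcover $\Omega \subseteq I^+(r_1) \cup \dots \cup I^+(r_k)$ and noting that each $r_j$ lies in some $I^+(r_i)$ produces, in the finite relation $r_i \ll r_j$, a cycle $r_{i_1} \ll r_{i_2} \ll \dots \ll r_{i_1}$. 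Concatenating the corresponding timelike segments yields a closed timelike curve, contradicting the hypothesis. The symmetric argument with $\alpha$-limit sets rules out backward-trapped orbits.

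With trapped orbits excluded, each orbit runs from a unique point of $\Sigma_1$ to a unique point of $\Sigma_2$ (uniqueness of the crossing follows because $V$ is inward on $\Sigma_1$ and outward on $\Sigma_2$, so an orbit cannot touch either boundary twice). Flowing forward therefore defines a map $\Phi : \Sigma_1 \to \Sigma_2$ whose inverse is the backward flow; transversality of $V$ to $\Sigma_2$ lets the implicit function theorem produce a smooth first-hitting time, and smooth dependence on initial conditions makes $\Phi$ smooth, so $\Phi$ is a diffeomorphism and in particular a homeomorphism. This contradicts the assumption that $\Sigma_1$ and $\Sigma_2$ are not homeomorphic, completing the contrapositive. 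I expect the main obstacle to be precisely the step ruling out trapped orbits: the boundary-with-collar bookkeeping for the flow, and the passage to the limit set where the compact-spacetime cycle argument can be run cleanly, is the part that requires the most care, and it is exactly where the absence of closed timelike curves is used.
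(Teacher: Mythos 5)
The paper itself offers no proof of this statement: it is quoted as a known result, adapted from Geroch's 1967 paper, so the only meaningful comparison is with Geroch's original argument. Your proposal is correct, and it is essentially a reconstruction of that argument: Geroch likewise takes a future-directed timelike vector field (available by time-orientability), notes it is transverse to the spacelike boundaries, inward along $\Sigma_1$ and outward along $\Sigma_2$, and shows the flow carries $\Sigma_1$ diffeomorphically onto $\Sigma_2$ unless some integral curve is imprisoned in the compact manifold, in which case a closed timelike curve must exist. Where you genuinely differ is in how an imprisoned orbit is converted into a CTC. Geroch closes the curve \emph{locally}: the trapped orbit must re-enter a small flow-box neighbourhood of an accumulation point, and the re-entry point lies in the chronological past of the earlier exit point, so a short timelike arc inside the box closes the loop. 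You instead pass to the $\omega$-limit set $\Omega$ and run the finite-subcover cycle argument on the open cover $\{I^+(r): r\in\Omega\}$ -- which is exactly the covering trick Geroch uses elsewhere to show that every compact spacetime \emph{without} boundary contains a CTC. Your variant is clean and buys a conceptually tidy reduction (trapped orbit $\Rightarrow$ compact invariant set $\Rightarrow$ cycle in $\ll$ $\Rightarrow$ CTC by transitivity), at the cost of a few points you should make explicit: that the sets $I^+(r)$ are open (standard, but it is what makes the subcover step legal); that $\Omega$ avoids $\Sigma_1$ as well as $\Sigma_2$ -- for $\Sigma_1$ this needs the small observation that an orbit accumulating on $\Sigma_1$ at large forward times would, flowed backward slightly, have exited through $\Sigma_1$ at a positive earlier time, contradicting that it remained in $M$; and that the concatenated piecewise-timelike loop produced by the cycle can be smoothed into a genuine closed timelike curve. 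Finally, note that your conclusion is stronger than the statement requires: you obtain a diffeomorphism (indeed the product cobordism structure, which is what Geroch actually proves), whereas the theorem as stated only needs $\Sigma_1$ and $\Sigma_2$ to be homeomorphic -- this is harmless, but worth flagging.
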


We are thus prompted to exclude topology-changing spacetimes on causal grounds, since they are not stably causal. A possible circumvention of this issue is to relax the assumption that $M$ admits a globally-defined Lorentzian metric. Instead, we may allow an \textit{almost} Lorentzian manifold in which the metric is allowed to degenerate at select points in the space. Within the context of topology change, this approach seems quite reasonable, as it naturally aligns with the Morse-theoretic view of smooth cobordisms. Under this reading, we may consider what is known as a \textit{Morse function}, which has critical points along which the topology of space will change. The gradient of this Morse function will then provide us with a vector field that vanishes only on the critical points, and this vector field may be used to define an almost-Lorentzian manifold. \\

Another causal requirement might be to suggest that the almost-Lorentz cobordism induces a \textit{causal poset} structure on its lightcones. According to the causality theory of \cite{penrose1972techniques, dowker2003topology, minguzzi2008causal, minguzzi2019lorentzian}, we may induce a binary \textit{causal precedence} relation $\leq$ on any almost-Lorentzian manifold. This relation states a point $p$ \textit{causally precedes} a point $q$, written $p \leq q$ if $q$ lies in the future lightcone of $p$. Causality properties may then be paraphrased as order-theoretic properties of the relation $\leq$. In particular, one may suggest including into \eqref{EQ: schematic Lorentzian path integral} all almost-Lorentz cobordisms in which the binary relation $\leq$ is reflexive, transitive and antisymmetric \cite{dowker2003topology}. Note that the requirement of antisymmetry excludes those spacetimes admitting closed timelike curves.

\subsection{The Trousers Space}
Topologically, the trousers space is homeomorphic to the $3$-punctured sphere, and may be seen as a cobordism from $S^1$ to $S^1\sqcup S^1$. This manifold is almost-Lorentzian, in the sense that it admits a non-degenerate Lorentzian metric everywhere except at a single point (commonly called the \textit{crotch singularity}). Away from this point, the Trousers space may be furnished with a Lorentzian metric that is locally isometric to the flat cylinder \cite{louko1997complex, anderson1986does}. \\

It seems reasonable to suggests that one may avoid Geroch's theorem by simply removing this troublesome point and allowing the manifold to be non-compact. However, with transition amplitudes such as \eqref{EQ: schematic Lorentzian path integral} in mind, it seems that we would like to preserve compactness as best as we can. An alternate resolution involves what is known as a \textit{causal closure} construction \cite{gerpenkron1972ideal}. In this method one chooses to maintain compactness, at the cost of an allowed degeneracy in the metric at the crotch singularity. A lightcone structure may still be placed at the crotch singularity, however this structure will be irregular in the sense that there will be double the amount of distinct lightcones -- two future-directed and two past-directed \cite{harris1990causal}. Suggestive depictions of the Trousers space and the origin of the irregular causal structure of its crotch singularity can be found in Figure 2 of the recent paper \cite{feng2024singularity}. \\

With a causally-closed trousers space, we may still evaluate the transition amplitude of \eqref{EQ: schematic Lorentzian path integral}. In two dimensions, the gravitational action in vacuum is given by the total scalar curvature of the manifold: 
    $$  \mathcal{S}(M,g) = \frac{1}{2\kappa}\int_M R dA +\frac{1}{\kappa} \int_{\partial M} k d\gamma,       $$
where here the latter term computes the geodesic curvature of the boundary and $\kappa$ is the gravitational constant. An analysis of Louko and Sorkin shows that the above action may still be evaluated for the Trousers space \cite{louko1997complex}. In their work they employ an $i\epsilon$-regularisation in which the Lorentzian metric is perturbed into a complex one in a controlled manner. Their conclusion is that the Trousers space has a $\delta$-like curvature localised to the crotch singularity of strength $\pm 2\pi i$, with the ambiguity being controlled by the sign of the regulariser. \\

Alternatively, the curvature of the Trousers space may be obtained via a certain Lorentzian Gauss-Bonnet theorem \cite{sorkin2019lorentzian}. We will discuss the details of the Lorentzian Gauss-Bonnet theorem and its subtleties at length in Section 3 of this paper, so for now we will merely deliver a brief summary. In short: the notion of Lorentzian angle does not make sense for vectors of different signatures, again it is commonplace to employ the $i\epsilon$-regularisation of \cite{louko1997complex} and complexify the Minkowski metric. This gives a notion of Lorentzian angle that is well-defined, yet complex-valued \cite{sorkin2019lorentzian, asante2023complex, neiman2013imaginary, jia2022complex}. The Lorentzian Gauss-Bonnet theorem then loosely states that: 
    $$\frac{1}{2}\int_M R dA + \int_{\partial M} k d\gamma = \mp 2\pi i \chi(M). $$
The imaginary coefficient on the right-hand-side arises from our complexification of the metric, with $\mp 2\pi i$ being the periodicity of angles around a point in flat two-dimensional Minkwoski space, and the sign ambiguity again provided by the $i\epsilon$-regulariser. For the trousers space the Euler characteristic equals $-1$, so we may conclude that its total scalar curvature equals $\pm 2\pi i$, in agreement with the prior analysis of \cite{louko1997complex}. \\

The sign ambiguity in the $i\epsilon$-regulariser may be resolved in several ways. It is commonly argued that the correct sign of the imaginary periodicity should be $-2\pi i$, since then 
$$  \exp \{i\mathcal{S}(M,g)\} =  \exp \left\{i\left( \frac{-2\pi i \chi(M)}{\kappa}\right) \right\} =  \exp \left\{\frac{i(+2\pi i)}{\kappa}\right\} =  \exp \left\{\frac{-2\pi}{\kappa}\right\} < 1, $$
which would in turn cause the trousers space to be suppressed relative to other spacetimes like the cylinder or the $2$-disk \cite{louko1997complex, sorkin2019lorentzian, asante2023complex, neiman2013imaginary}. In this paper we will only focus on transitions from $S^1$ to $S^1 \sqcup S^1$. Within our context, a similar argument is valid: creating some more complicated trousers-like transition between circles by adding extra genera to the bulk will always decrease the Euler characteristic. The Lorentzian Gauss-Bonnet theorem roughly stated above then implies that adding more and more genera to the Trousers space will continually increase the imaginary part of the action, and thus the transition amplitudes of \eqref{EQ: schematic Lorentzian path integral} will become exponentially small. In contrast, if we were to choose the other sign convention, then we are left with a periodicity of $+2\pi i$ in which higher genus trousers-like spaces would be exponentially enhanced. In this sense, the sign convention advocated in \cite{louko1997complex, sorkin2019lorentzian, asante2023complex, neiman2013imaginary} indeed appears to be the correct one. 

\subsection{A Primer on non-Hausdorff Topologies}
The Hausdorff property states that any pair of distinct points in a topological space may be separated by disjoint open sets. Conversely, a topological space is called \textit{non-Hausdorff} whenever there exists a pair of points whose open neighbourhoods always intersect. Hausdorffness is usually assumed in the definition of a manifold, however there still exist non-Hausdorff locally-Euclidean spaces. For example, the right-hand-side of Figure \ref{fig: nh manifolds} depicts a simple one dimensional non-Hausdorff manifold, commonly called the \textit{branched line}. In this space there are two copies of the origin, and the topology is defined to be locally-equivalent to the real line. One can see that any pair of open intervals around the Hausdorff-violating pair of origins will necessarily intersect throughout the negative numbers. \\

    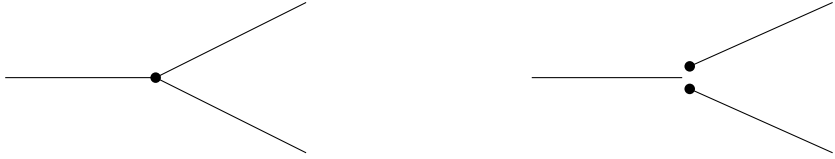
\begin{figure}
        \centering
        \begin{tikzpicture}
        \draw[] (-2,0)--(0,0);
        \draw[] (0,0)--(2,1);
        \draw[] (0,0)--(2,-1);
        \fill[] (0,0) circle (0.7mm);

        \draw[] (5,0)--(7,0);
        \draw[] (7.1,0.15)--(9,1);
        \draw[] (7.1,-0.15)--(9,-1);
        \fill[] (7.1,0.15) circle (0.7mm);
        \fill[] (7.1,-0.15) circle (0.7mm);
        
    \end{tikzpicture}
        \caption{Hausdorff (left) and non-Hausdorff (right) topology change in one dimension. The Hausdorff topology change is necessarily singular, whereas the non-Hausdorff model blows up the singularity into two distinct points and adjusts their separability in order to ensure that the space remains locally-Euclidean.}
        \label{fig: nh manifolds}
    \end{figure}

In the manifold setting, Hausdorff violation may be seen in many different ways. Perhaps the most instructive is via the non-uniqueness of limits: it is well-known that any convergent sequence in a manifold has a unique limit, provided that manifold is Hausdorff. In the non-Hausdorff setting this is no longer true -- in fact, a pair of points will violate the Hausdorff property if and only if they can be realised as distinct limits of the same sequence \cite{o2023nonHausAdj}.
\\ 

Usually we include the Hausdorff property in the definition of a manifold for technical convenience. In particular, it can be shown that any open cover of a Hausdorff manifold admits a partition of unity subordinate to it \cite{lee2013smooth}. These arbitrarily-existent partitions of unity are used frequently in order to construct various geometric structures of interest. In the non-Hausdorff case, such partitions of unity do not exist in full generality, and thus the various enjoyable features of Hausdorff manifolds may appear to be in jeopardy. Put differently: without the Hausdorff property we do not have access to the usual constructive techniques, and \textit{prima facie} it is not clear whether non-Hausdorff manifolds are as expressive as their Hausdorff counterparts. \\

Despite this issue with partitions of unity, it is nonetheless possible to describe a differential geometry of non-Hausdorff manifolds. Underpinning this study is the observation that non-Hausdorff manifolds may be constructed by gluing together ordinary Hausdorff manifolds along open sets \cite{haefliger1957varietes, hajicek1971causality, o2023nonHausAdj, luc2020interpreting}. Intuitively, if one glues together Hausdorff manifolds along an open subset but leaves the boundary of this subset unidentified, then this boundary may become Hausdorff-violating in the quotient space. As an example: we may realise the branched line of Figure \ref{fig: nh manifolds} by gluing together two copies of the real line along the subset $A:=(-\infty, 0)$. In the resulting quotient space, any sequence of negative numbers that would ordinarily converge to the origin will now have two distinct limits, thereby realising Hausdorff-violation. 

\subsection{Outline of paper}
In Section 2 we will provide a formal overview of non-Hausdorff differential geometry. We will start with matters topological, and then move on to smooth structures, bundles, integrals and eventually curvature. Underpinning our discussion is the aforementioned gluing concept -- essentially all of these geometric structures may be defined on non-Hausdorff spaces by first defining them on Hausdorff submanifolds, and then by imposing some consistency conditions on overlapping submanifolds. The non-Hausdorff manifolds that we define will be locally-isomorphic to Hausdorff ones, however their global features will differ. In particular, we will see that their notion of integration needs to include the extra Hausdorff-violating data in order to be well-defined, and this in turn will have some far-reaching consequences for the rest of the paper.  \\

In Section 3 we will discuss various extensions of the Gauss-Bonnet theorem. We start with the standard statement for Riemannian manifolds found in say \cite{do2016differential}, and we will then modify it in two orthogonal directions: firstly, we will pass from Riemannian metrics to Lorentzian metrics, and secondly, we will pass from Hausdorff surfaces to non-Hausdorff ones. The result of our discussion will be a non-Hausdorff version of the  Gauss-Bonnet theorem in Lorentzian signature. Here we will see a crucial novelty -- due to integration results of Section 2, the non-Hausdorff result will require an extra counterterm that computes the geodesic curvature of the Hausdorff-violating submanifold sitting inside the manifold. This counterterm is a sort-of ``internal boundary" term that has no analogue in the Hausdorff regime. \\

In Section 4 we provide the primary contribution of this paper. Here, we will study a non-Hausdorff version of the trousers space. In essence, this ``non-Hausdorff trousers space" can be see as a version of Penrose's spacetime of Figure \ref{FIG: Penrose spacetime} that has been compactified so that its initial and final surfaces equal $S^1$ and $S^1 \sqcup S^1$, respectively. To begin with, in Section 4.1 we will analyse the causal properties of the non-Hausdorff trousers space. Using the results of Section 2, we will argue that this space cannot be excluded from the path integral \eqref{EQ: schematic Lorentzian path integral} on the basis that it is not a rich-enough geometric structure. We will then analyse its causal properties by confirming the existence of a global time function, the non-existence of closed timelike curves, the compactness of its causal diamonds, and the poset structure of its causality relation $\leq$. \\

In the remainder of Section 4 we will then determine the gravitational action for the non-Hausdorff trousers space. As an organisational choice, we will first motivate the Lorentzian action from its Euclidean cousin. In line with the Gauss-Bonnet theorems of Section 3, we will see that the non-Hausdorff gravitational action requires another Gibbons-Hawking-York term for the extra Hausdorff-violating surface. We will argue that the non-Hausdorff trousers space has zero curvature, meaning that in the Euclidean theory there is no inherent mechanism that would enable its suppression. In the Lorentzian theory, however, we will see that the presence of corner terms in the action, together with the freedom to choose signs of the $i\epsilon$-regulator, will allow us to suppress the non-Hausdorff trousers space as desired. Finally, we finish with some brief remarks regarding more elaborate non-Hausdorff branching. 

\section{Non-Hausdorff Differential Geometry}
We start with a review of non-Hausdorff manifolds. Throughout this section and the remainder of this paper, we will reserve the term ``manifold" for its ordinary usage, that is, manifolds are taken to be Hausdorff, locally-Euclidean and second-countable. In distinction to this, we will use the term ``non-Hausdorff manifold" to mean a non-Hausdorff, locally-Euclidean, second-countable space.  Regarding notation, we will mostly follow the notation of Lee in \cite{lee2013smooth} for ordinary differential geometry, and \cite{o2023nonHausAdj, o2023vectorbun, o2023deRham} for non-Hausdorff variants. In particular, we will use boldface letters to denote non-Hausdorff manifolds and their geometric structures -- for example $\textbf{M,N},..$ would denote non-Hausdorff manifolds whereas $M,N,...$ would denote Hausdorff ones. 

\subsection{Topological Structure}
To begin, we will describe a general technique for gluing together manifolds. This formalism is known as an \textit{adjunction space} in the literature, though may take subtly different forms depending on the context. We assume as input two Hausdorff manifolds $M_1$ and $M_2$ of the same dimension, a subset $A$ of $M_1$, and a continuous map $f:A\rightarrow M_2$. We may then glue $M_1$ to $M_2$ along the map $f$ by quotienting the disjoint union $M_1 \sqcup M_2$ according to the equivalence relation that identifies each point in $A$ to its image under $f$. \\

This notion of adjunction space is far too general, and may spoil the topological structure of $A$ in the gluing process. The following result identifies some conditions under which the adjunction space described above will yield a non-Hausdorff manifold.

\begin{theorem}[{\cite{o2023nonHausAdj}}]\label{THM: topological nh manifold}
    Let $M_1$ and $M_2$ be two Hausdorff manifolds of the same dimension, and let $A$ be an open subset of $M_1$. Suppose that $f:A\rightarrow M_2$ is an open topological embedding. If $f$ can be extended to a closed embedding $f:\overline{A}\rightarrow M_2$,\footnote{Here we use the notation $\overline{A}$ to denote the topological closure of $A$ within $M_1$.} then the quotient space $$ \textbf{M} := \frac{M_1 \sqcup M_2}{a \sim f(a)} = M_1 \cup_f M_2$$ is a non-Hausdorff manifold in which the Hausdorff-violating points occur precisely at the boundary of the image of $A$ in the quotient space.  
\end{theorem}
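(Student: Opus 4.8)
The plan is to work directly with the quotient map $q: M_1 \sqcup M_2 \to \mathbf{M}$ and its restrictions $q_1 := q|_{M_1}$ and $q_2 := q|_{M_2}$, and to peel the statement into three essentially independent tasks: local Euclideanness, second countability, and the identification of the Hausdorff-violating locus. The first thing I would establish is that each $q_i$ is an open topological embedding. Injectivity is automatic, since the relation $a \sim f(a)$ identifies no two points of $M_1$ with one another and identifies points of $M_2$ only through the injective map $f$. For openness I would compute the saturation of an open set: for $U \subseteq M_1$ open one has $q^{-1}(q(U)) = U \sqcup f(U \cap A)$, which is open because $f$ is an open embedding (so $f(U \cap A)$ is open in $f(A)$, hence in $M_2$); symmetrically, for $V \subseteq M_2$ open one has $q^{-1}(q(V)) = V \sqcup (f^{-1}(V) \cap A)$, open by continuity of $f$. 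Thus $q$ is an open map and each $q_i$ is a homeomorphism onto an open set $U_i := q_i(M_i)$, with $U_1 \cup U_2 = \mathbf{M}$ and $U_1 \cap U_2 = q(A)$. Local Euclideanness is then immediate: every point lies in $U_1 \cong M_1$ or $U_2 \cong M_2$, each of which is locally Euclidean.

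Second countability I would deduce from the fact that $q$ is an open surjection. If $\{B_i\}$ is a countable base for $M_1 \sqcup M_2$, then $\{q(B_i)\}$ is a countable family of open sets, and it is a base because for any open $W \ni p$ in $\mathbf{M}$ the saturated set $q^{-1}(W)$ contains some $B_i$ around a preimage of $p$, whence $p \in q(B_i) \subseteq W$.

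The substantive part is locating the Hausdorff-violating points, and here I would first cut down the cases. Since $U_1 \cong M_1$ and $U_2 \cong M_2$ are Hausdorff and open, any two points lying in a common $U_i$ can be separated inside it; likewise any pair one of whose members lies in $U_1 \cap U_2 = q(A)$ is separable. Hence the only candidates for an inseparable pair are $p = q_1(x)$ with $x \in M_1 \setminus A$ and $p' = q_2(y)$ with $y \in M_2 \setminus f(A)$. Using that the sets $q_1(W)$ and $q_2(V)$ (for $W \ni x$, $V \ni y$ open) form neighborhood bases at $p$ and $p'$, I would compute the key intersection $q_1(W) \cap q_2(V) = q\big(W \cap A \cap f^{-1}(V)\big)$, so that $p$ and $p'$ are inseparable precisely when $W \cap A \cap f^{-1}(V) \neq \emptyset$ for all such $W, V$; equivalently $y \in \bigcap_{W \ni x} \overline{f(W \cap A)}$.

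The closed-embedding hypothesis is exactly what controls this last intersection, and I expect this to be the crux of the argument. Continuity of the extension $\overline{f}: \overline{A} \to M_2$ forces the intersection to be empty unless $x \in \overline{A}$, i.e.\ $x \in \partial A$, and then, using that $f(W \cap A) = \overline{f}(W \cap A)$ accumulates only at $\overline{f}(x)$ together with the regularity (metrizability) of $M_2$, to equal the single point $\overline{f}(x)$. Injectivity of $\overline{f}$ upgrades $x \mapsto \overline{f}(x)$ to a bijection $\partial A \to \partial f(A)$, while closedness of $\overline{f}(\overline{A})$ guarantees $\overline{f(A)} \setminus f(A) = \overline{f}(\partial A)$, so that no stray point of $M_2$ is inseparable from the $M_1$ side. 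Assembling these, the inseparable pairs are exactly $\big(q_1(x), q_2(\overline{f}(x))\big)$ for $x \in \partial A$, so the Hausdorff-violating locus is $q_1(\partial A) \cup q_2(\partial f(A))$. Finally I would check that this set coincides with the boundary $\partial\, q(A)$ of the image of $A$ in $\mathbf{M}$: since $q(A)$ is open and the $q_i$ are embeddings, a point $q_1(x)$ (resp.\ $q_2(y)$) lies in $\overline{q(A)} \setminus q(A)$ exactly when $x \in \overline{A} \setminus A = \partial A$ (resp.\ $y \in \overline{f(A)} \setminus f(A) = \partial f(A)$), which is the same set. The main obstacle is thus not the manifold axioms—which need only that $f$ be an open embedding—but the careful use of all three clauses of the closed-embedding hypothesis to pin each boundary point to a unique partner and to exclude spurious inseparable pairs.
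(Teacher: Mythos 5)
The paper never proves this theorem internally—it is imported verbatim from the cited reference \cite{o2023nonHausAdj}—so there is no in-paper argument to compare against; judged on its own merits, your proof is correct, and it is essentially the standard adjunction-space argument one would expect that reference to run: saturation computations showing $q$ is open and each $q_i$ is an open embedding (giving local Euclideanness), second countability from the open surjection, and the reduction of candidate inseparable pairs to $q_1(M_1\setminus A)\times q_2(M_2\setminus f(A))$ with the clean criterion $y\in\bigcap_{W\ni x}\overline{f(W\cap A)}$. All of these steps check out, including the identification of the violating locus with $\partial\,q(A)$ via $\overline{f(A)}\setminus f(A)=\overline{f}(\partial A)$. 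Two tightenings are worth recording. First, your crux step—that for $x\in\partial A$ the intersection is exactly $\bigl\{\overline{f}(x)\bigr\}$—is phrased loosely (``accumulates only at $\overline{f}(x)$'' plus ``metrizability of $M_2$''); the clean route uses (a) that $\overline{f}$ is a closed map into $M_2$ (immediate from the closed-embedding hypothesis, since images of sets closed in $\overline{A}$ are closed in the closed set $\overline{f}(\overline{A})$, hence in $M_2$), which gives $\overline{f(W\cap A)}\subseteq\overline{f}\bigl(\overline{W\cap A}\bigr)$, and (b) Hausdorffness of $M_1$ to separate $x$ from any other $x'\in\partial A$ by disjoint opens $W,W'$, whence injectivity of $\overline{f}$ excludes $\overline{f}(x')$ from $\overline{f(W\cap A)}$. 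Metrizability of $M_2$ would also work (via sequences and the homeomorphism-onto-image property) but is not needed; also, emptiness of the intersection for $x\notin\overline{A}$ needs no continuity at all, just $W=M_1\setminus\overline{A}$. Second, strictly speaking the conclusion that $\textbf{M}$ \emph{is} non-Hausdorff requires $\partial A\neq\emptyset$: if $A$ is clopen, your (correct) characterization yields a Hausdorff quotient with empty violating locus, a degenerate case that both you and the paper's statement leave implicit.
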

At first glance, the above appears to be very similar to the connected sum of manifolds (cf. \cite{lee2013smooth}). However, there is an important distinction: the connected sum assumes that the gluing region $A$ is topologically \textit{closed}, which is necessary in order to preserve the Hausdorff property. However, in our context, we want to take our gluing region to be an \textit{open} subset with a non-empty boundary. This assumption intentionally spoils the Hausdorff property, since the boundaries of these glued open sets remain unidentified, thus serve as distinct limits to the same sequences.  \\

It should be noted at this point that the adjunction space construction may also be phrased categorically -- the quotient construction of $\textbf{M}$ in the above result can be viewed as the colimit of the upper-left corner of the diagram below in the category of topological spaces. 
\begin{center}
    \begin{tikzcd}[row sep=3.5em]
A \arrow[d, "f"'] \arrow[r, "\iota"] & M_1 \arrow[d, "\phi_1"] \\
M_2 \arrow[r, "\phi_2"']             & \textbf{M}             
\end{tikzcd}
\end{center}
In the above the map $\iota: A\rightarrow M_1$ is the inclusion map, and the $\phi_i: M_i \rightarrow \textbf{M}$ are the canonical maps that send each point to its equivalence class in $\textbf{M}$. It can be shown that the maps $\phi_i$ are open topological embeddings, provided that $f$ is open and $A$ itself is \cite{o2023nonHausAdj}, and in fact this is precisely what is used in order to transfer the local charts from $M_i$ into $\textbf{M}$. By construction, any local chart $(U, \varphi)$ of $M_i$ defines a chart $(\phi_i(U), \varphi\circ \phi_i^{-1})$ on $\textbf{M}$, and it is in this sense that the non-Hausdorff manifold $\textbf{M}$ of Theorem \ref{THM: topological nh manifold} is locally equivalent to the manifolds $M_1$ and $M_2$. Given that the $\phi_i$ maps are open topological embeddings, we may see $M_1$ and $M_2$ as sitting inside $\textbf{M}$ as maximal Hausdorff open submanifolds. \\

The idea that the maps $\phi_i$ will be as equally well-behaved as the gluing map $f$ may be extended beyond topology alone. As the next result illustrates, we may actually pass this entire adjunction construction into an appropriate smooth category.

\begin{theorem}[{\cite{o2023vectorbun}}]\label{THM: smooth nh manifold}
    Suppose in addition to the criteria of Theorem \ref{THM: topological nh manifold} that the $M_i$ and $A$ are all smooth manifolds, and $f:A\rightarrow M_2$ is a smooth map. If $f$ can be extended to a smooth embedding $\overline{f}:\overline{A}\rightarrow M_2$, then $\textbf{M}$ can be endowed with a smooth atlas.
\end{theorem}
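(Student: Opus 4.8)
The plan is to upgrade the topological adjunction of Theorem \ref{THM: topological nh manifold} to a smooth one by equipping $\textbf{M}$ with an atlas inherited from $M_1$ and $M_2$, and then checking that the transition maps are smooth. Since Theorem \ref{THM: topological nh manifold} already guarantees that the canonical maps $\phi_i : M_i \to \textbf{M}$ are open topological embeddings, the natural candidate atlas is precisely the collection of charts transported along the $\phi_i$: for each smooth chart $(U,\varphi)$ of $M_i$, declare $(\phi_i(U), \varphi \circ \phi_i^{-1})$ to be a chart on $\textbf{M}$. First I would verify this is a genuine atlas, i.e.\ that the images $\phi_i(U)$ cover $\textbf{M}$. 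This is immediate because $\phi_1(M_1) \cup \phi_2(M_2) = \textbf{M}$ by surjectivity of the quotient map, and each $M_i$ is itself covered by its smooth charts.

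The substance of the proof lies in checking smooth compatibility of overlapping charts, and here there are two qualitatively different cases. The easy case is when both charts come from the same $M_i$: then the transition map is just an ordinary transition map of the smooth manifold $M_i$ conjugated by the homeomorphism $\phi_i$, hence automatically smooth. The genuine work is the mixed case, where one chart comes from $M_1$ and the other from $M_2$. The plan is to observe that two such charts can only overlap over the glued region, i.e.\ over $\phi_1(A) = \phi_2(f(A))$, since away from the identification the images $\phi_1(M_1 \setminus \overline{A})$ and $\phi_2(M_2 \setminus f(\overline{A}))$ are disjoint. On this overlap the transition map between the $M_1$-chart and the $M_2$-chart is computed by passing through the gluing identification $a \sim f(a)$, and so it factors through the map $f$ itself. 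Concretely, the composite reduces to $\psi \circ f \circ \varphi^{-1}$ (suitably restricted), where $\varphi$ is an $M_1$-chart and $\psi$ an $M_2$-chart; since $f$ is assumed smooth on $A$ and $\varphi,\psi$ are diffeomorphisms onto open subsets of Euclidean space, this composite is smooth.

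The hard part, and the reason the hypothesis on $\overline{f}$ is needed, concerns the behaviour at the Hausdorff-violating locus, namely the boundary $\partial A$ and its image. A point of $\partial A$ that is doubled in $\textbf{M}$ cannot be covered by a single chart that mixes both sheets; rather, each Hausdorff-violating point is covered by a chart coming from exactly one of the $M_i$ (this is consistent with the earlier remark that $M_1, M_2$ sit inside $\textbf{M}$ as maximal Hausdorff open submanifolds). I expect the main obstacle to be showing that the smooth structure is well-defined and unambiguous in a neighbourhood of these doubled points -- that the smooth extension $\overline{f}:\overline{A}\to M_2$ guarantees the charts from $M_1$ near $\partial A$ and the charts from $M_2$ near $f(\partial A)$ assemble consistently, so that no contradictory smooth structure is forced on the overlap approaching the boundary. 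The role of $\overline{f}$ being a smooth embedding (rather than merely $f$ smooth on the open set $A$) is exactly to control the transition maps uniformly up to and including the boundary, ensuring the factored composite $\psi \circ f \circ \varphi^{-1}$ extends smoothly and that the two local pictures agree. Once this compatibility is established on all overlaps, the transported charts form a smooth atlas, and $\textbf{M}$ acquires the desired smooth structure, completing the proof.
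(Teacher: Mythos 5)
Your overall strategy --- transporting the smooth charts of the $M_i$ along the canonical open embeddings $\phi_i$, noting that same-sheet overlaps reduce to the transition maps of $M_i$, and that mixed overlaps lie inside $\phi_1(A)=\phi_2(f(A))$ with transition maps factoring through $f$ as $\psi\circ f\circ\varphi^{-1}$ --- is exactly the construction the paper has in mind: the theorem is imported from \cite{o2023vectorbun}, and the paper itself only sketches the chart-transport mechanism (``any local chart $(U,\varphi)$ of $M_i$ defines a chart $(\phi_i(U),\varphi\circ\phi_i^{-1})$ on $\textbf{M}$''). However, there is one genuine gap and one misplaced emphasis. The gap: smooth compatibility of an atlas requires the transition maps to be diffeomorphisms, and you only verify one direction. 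You must also show $\varphi\circ f^{-1}\circ\psi^{-1}$ is smooth, i.e.\ that $f^{-1}$ is smooth on $f(A)$, and mere smoothness of $f$ does \emph{not} give this: gluing two copies of $\mathbb{R}$ along $f(x)=x^3$ (a smooth open topological embedding whose inverse $y\mapsto y^{1/3}$ fails to be smooth at $0$) produces transported atlases that are not compatible. This is where the \emph{embedding} hypothesis earns its keep: since $\overline{f}$ is a smooth embedding, its restriction $f$ is an injective immersion between manifolds of equal dimension ($A$ is open in $M_1$), hence a local diffeomorphism, hence a diffeomorphism onto its open image, so both directions of the transition are smooth. Your proposal never invokes the immersion property, which is the actual content needed here.

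The misplaced emphasis is in your final paragraph. As you yourself observe, each Hausdorff-violating point is covered by charts from exactly one $M_i$; since the mixed overlaps are contained in the \emph{open} set $\phi_1(A)$ and never reach $\partial A$, smooth compatibility is a pointwise condition on $A$ alone, and there is nothing to check ``up to and including the boundary'' --- no contradictory structure can be forced at the doubled points, because they lie in no mixed overlap. The extension of $f$ to $\overline{A}$ is therefore not what makes the charts compatible near the Hausdorff-violating locus; it is the smooth analogue of the closed-embedding hypothesis of Theorem \ref{THM: topological nh manifold}, controlling where the Hausdorff violations occur and how $\textbf{M}$ behaves there (and, as above, supplying the embedding property whose restriction to $A$ yields the two-sided smoothness). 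With the inverse-direction check added and the role of $\overline{f}$ stated correctly, your argument is complete and coincides with the intended proof.
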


Once endowed with a smooth atlas, the canonical embeddings $\phi_i: M_i \rightarrow \textbf{M}$ now become \textit{smooth} open embeddings. Consequently, we may view the Hausdorff manifolds $M_i$ as smooth open submanifolds of $\textbf{M}$, with the $\phi_i$ acting as local diffeomorphisms. \\

It should be noted at this stage that the colimit formulation of \cite{o2023nonHausAdj, o2023vectorbun} is far more general than what we have presented here, in that it may also be extended to colimits of more than two manifolds. However, since we will only be considering transitions from $S^1$ to $S^1\sqcup S^1$, we will not require this formalism in full generality. Throughout the remainder of this section, we will assume that $\textbf{M}$ is a non-Hausdorff manifold built according to Theorems \ref{THM: topological nh manifold} and \ref{THM: smooth nh manifold}.

\subsection{Vector Bundles}
Smooth vector bundles over a non-Hausdorff manifold can be described with an analogue of the colimit construction of Theorems \ref{THM: topological nh manifold} and \ref{THM: smooth nh manifold}. The only major difference is that we must also require the existence of a gluing map for the fibres of the part of the bundle that lies over the gluing region $A$. Once this is correctly done, we may indeed glue bundles along their fibres in order to form a non-Hausdorff vector bundle. In a manner similar to that of \cite{o2023nonHausAdj, luc2020interpreting}, a converse to this construction holds: any vector bundle $\textbf{E}$ fibred over $\textbf{M}$ is in fact a colimit of ordinary Hausdorff bundles $E_i$ fibred over the $M_i$. \\

Intuitively speaking, we can represent any smooth section $\textbf{s}$ of a non-Hausdorff bundle $\textbf{E} \xrightarrow{\boldsymbol{\pi}} \textbf{M}$ by pulling it back to the Hausdorff bundles $\phi_i^*\textbf{E} \rightarrow M_i$ and describing it piecewise. Provided that the two pulled-back sections $\phi_i^*\textbf{s}$ agree once mutually restricted to the gluing region $\textbf{A}$, it is then possible to canonically reconstruct $\textbf{s}$ from Hausdorff data. Figure \ref{FIG: gluing sections} depicts a semi-local representation of a non-Hausdorff section. \\

The pullback correspondence of Figure \ref{FIG: gluing sections} may also be phrased algebraically. For any non-Hausdorff vector bundle $\textbf{E}$ over $\textbf{M}$, we may use pointwise addition and multiplication by scalar functions to endow the space of sections $\Gamma(\textbf{E})$ with the structure of a $C^\infty(\textbf{M})$-module. This can then be related to the spaces of sections of the Hausdorff submanifolds as follows. 

\begin{theorem}[{\cite{o2023nonHausAdj}}] \label{THM: sections are fibre product}
    Let $\textbf{E}$ be a vector bundle over $\textbf{M}$ with colimit representation $E \cong E_1 \cup_{F}E_2$ where $F:E_A\rightarrow E_2$ is a bundle isomorphism covering $f:A\rightarrow M_2$. Then the space of smooth sections $\Gamma(\textbf{E})$ is canonically isomorphic to the fibre product: 
    $$ \Gamma(\textbf{E}) \cong \Gamma(E_1)\times_{\Gamma(E_A)} \Gamma(E_2) = \big\{ (s_1, s_2) \in \Gamma(E_1)\oplus \Gamma(E_2) \ \big{|} \  \iota_A^* s_1 = f^*s_2    \big\}.$$
\end{theorem}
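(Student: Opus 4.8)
The plan is to exhibit the claimed isomorphism concretely, by constructing a restriction-by-pullback map $\Gamma(\textbf{E}) \to \Gamma(E_1)\times_{\Gamma(E_A)}\Gamma(E_2)$ together with an explicit gluing map in the opposite direction, and then verifying that these are mutually inverse $C^\infty(\textbf{M})$-module homomorphisms. The essential geometric inputs are already in hand from Theorems \ref{THM: topological nh manifold} and \ref{THM: smooth nh manifold}: the canonical maps $\phi_i: M_i \to \textbf{M}$ are smooth open embeddings whose images cover $\textbf{M}$, they overlap precisely on $\phi_1(A) = \phi_2(f(A))$, and the boundary points $\phi_1(\partial A)$ together with $\phi_2(\overline{f}(\partial A))$ furnish the disjoint, unidentified Hausdorff-violating pairs.

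First I would define the forward map by pullback, sending $\textbf{s} \in \Gamma(\textbf{E})$ to the pair $(\phi_1^*\textbf{s}, \phi_2^*\textbf{s})$. Since each $\phi_i$ is a smooth open embedding, $\phi_i^*\textbf{s}$ is a genuine smooth section of $E_i = \phi_i^*\textbf{E}$. To see that this pair lands in the fibre product I would use that $\phi_1 \circ \iota_A$ and $\phi_2 \circ f$ agree as maps $A \to \textbf{M}$, which is exactly the identification $a \sim f(a)$ defining the colimit; restricting $\textbf{s}$ along either composite then yields the same element of $\Gamma(E_A)$, and unwinding the bundle isomorphism $F$ this is precisely the compatibility condition $\iota_A^*s_1 = f^*s_2$.

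The substance lies in constructing the inverse. Given a compatible pair $(s_1, s_2)$, I would define a candidate section $\textbf{s}$ pointwise: on the open set $\phi_i(M_i)$ put $\textbf{s} = (\phi_i)_* s_i$, pushing $s_i$ forward through the local diffeomorphism $\phi_i$. This is unambiguous off the overlap, and on $\phi_1(A) = \phi_2(f(A))$ the two prescriptions coincide exactly because $\iota_A^*s_1 = f^*s_2$ identifies $s_1|_A$ with $s_2 \circ f$ under $F$. Since $\{\phi_1(M_1), \phi_2(M_2)\}$ is an open cover of $\textbf{M}$ on each member of which $\textbf{s}$ restricts to the smooth section $(\phi_i)_* s_i$, smoothness of $\textbf{s}$ follows from the locality of the smoothness condition; this is the step where it matters that the $\phi_i$ are open embeddings rather than merely continuous, so that ``smooth on each $\phi_i(M_i)$'' genuinely characterises smoothness on $\textbf{M}$.

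I expect the main obstacle to be the careful bookkeeping around the Hausdorff-violating locus: one must check that the pushforward produces a well-defined value at each doubled boundary point $\phi_1(b)$ and $\phi_2(\overline{f}(b))$ for $b \in \partial A$, and, crucially, that \emph{no} agreement between $s_1$ and $s_2$ is demanded there, since those points are not identified in $\textbf{M}$ and each lies in exactly one of the two charts. This is precisely why the fibre product is formed over $\Gamma(E_A)$, the sections on the \emph{open} region $A$, and not over anything involving its closure. Once well-definedness and smoothness are secured, verifying that the two constructions are mutually inverse is a formal pullback/pushforward cancellation, and that they intertwine the $C^\infty(\textbf{M})$-module structures follows because both operations are pointwise-linear and compatible with restriction of scalar functions along the $\phi_i$.
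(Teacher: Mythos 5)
Your construction is correct, and it is essentially the standard argument: the paper itself does not reprove this theorem (it cites \cite{o2023nonHausAdj}), but its surrounding discussion frames the result categorically, as the statement that the contravariant functor $\Gamma(\cdot)$ carries the colimit $\textbf{E} \cong E_1 \cup_F E_2$ to the limit (fibre product) of the corresponding diagram of section modules. Your proof is the concrete unpacking of exactly that statement: the forward map is the cone of restrictions $\textbf{s} \mapsto (\phi_1^*\textbf{s}, \phi_2^*\textbf{s})$, whose compatibility follows from $\phi_1 \circ \iota_A = \phi_2 \circ f$ (the defining identification of the quotient), and the inverse is the gluing of a compatible pair over the open cover $\{\phi_1(M_1), \phi_2(M_2)\}$, with smoothness secured by locality since the $\phi_i$ are smooth open embeddings. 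You also correctly isolate the one place where non-Hausdorffness could cause trouble and explain why it does not: the doubled points $\phi_1(b)$ and $\phi_2(\overline{f}(b))$ for $b \in \partial A$ each lie in only one chart, so no matching condition is (or should be) imposed there --- which is precisely why the fibre product is taken over $\Gamma(E_A)$ with $A$ open, rather than over anything involving $\overline{A}$. One could add the small observation that the values at the doubled points are nonetheless not free: continuity forces $s_1(b)$ and $s_2(\overline{f}(b))$ to be the respective limits of the common restriction along $A$, so the glued section is determined on the Hausdorff-violating locus by the data on $A$; this does not affect well-definedness, but it explains why no further hypothesis at $\partial A$ is needed. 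The hands-on route buys an explicit description useful for computations (e.g.\ for the vector fields and metrics used later in the paper), while the categorical phrasing buys immediate generality to colimits of more than two manifolds; the two are interchangeable here.
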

On the level of an individual section, the above result is stating that defining a smooth section $\textbf{s}$ on $\textbf{E}$ amounts to defining a pair of sections $s_i$ on the restricted bundles $E_i$ that agree once mutually pulled back to the bundle $E_A$. On the categorical level, the pullback of sections by smooth maps allows us to see $\Gamma(\cdot)$ as a contravariant functor. Once applied to the colimit diagram used to construct/describe the bundle $\textbf{E}$, we obtain a new diagram in a particular (abelian) category of modules over rings. Theorem \ref{THM: sections are fibre product} then states that the contravariant functor $\Gamma(\cdot)$ sends our colimit $\textbf{E}$ into the limit $\Gamma(\textbf{E})$. \\
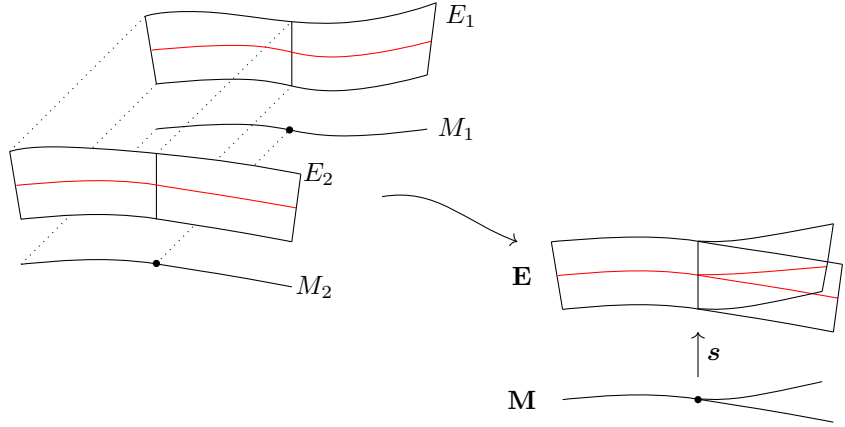
\begin{figure}
    \centering
    \begin{tikzpicture}[scale=0.6]

    \draw[dotted] (-9,3.02)--(-6.05,5.98);
    \draw[dotted] (-9,4)--(-6,6.95);
    \draw[dotted] (-9,5.45)--(-6,8.39);
    \draw[dotted] (-12,3)--(-9,6);
    \draw[dotted] (-12,4)--(-9,7);
    \draw[dotted] (-12.25,5.5)--(-9.25,8.5);
    
    
    \fill[white, opacity=0.8] plot[smooth, tension=.8] coordinates {(-6, 3.5) (-8,3.85) (-10,4.1) (-12,4)} -- plot[smooth, tension=.8] coordinates {(-12.25,5.5) (-11,5.6) (-8,5.35) (-5.8, 5)};
    
    \draw[->] (-4,4.5) to [out=10,in=160] (-1,3.5);
    \node[] at (-2.35,6) {\footnotesize{$M_1$}};
    \node[] at (-2.25,8.5) {\footnotesize{$E_1$}}; 
    \node[] at (-5.5,2.5) {\footnotesize{$M_2$}};
    \node[] at (-5.4,5) {\footnotesize{$E_2$}};

    \draw[] plot[smooth, tension=.8] coordinates {(-9,6) (-7,6.1) (-5,5.85) (-3, 6)};
    \draw[] plot[smooth, tension=.8] coordinates {(-9,7) (-7,7.1) (-5,6.85) (-3, 7.2)};
    \draw[] plot[smooth, tension=.8] coordinates {(-9.25,8.5) (-8,8.6) (-5,8.35) (-2.8, 8.8)};
    
    \draw[] (-9.25,8.5)--(-9,7);
    \draw[] (-2.8,8.8)--(-3,7.2);
    \draw[] (-6,8.39)--(-6,6.95);
    
    \fill(-6.05,5.98) circle[radius=0.08cm] {};
    


    \draw[] plot[smooth, tension=.8] coordinates {(-12,3) (-10,3.1) (-8,2.85) (-6, 2.5)};
    \draw[] plot[smooth, tension=.8] coordinates {(-12,4) (-10,4.1) (-8,3.85) (-6, 3.5)};
    \draw[] plot[smooth, tension=.8] coordinates {(-12.25,5.5) (-11,5.6) (-8,5.35) (-5.8, 5)};
    
    \draw[] (-12.25,5.5)--(-12,4);
    \draw[] (-5.8,5)--(-6,3.5);
    \draw[] (-9,5.45)--(-9,4);
    
    \fill(-9,3.02) circle[radius=0.08cm] {};
    

    \draw[red] plot[smooth, tension=.8] coordinates {(-9.1,7.75) (-7,7.85) (-5,7.6) (-2.9, 7.95)};
    \draw[red] plot[smooth, tension=.8] coordinates {(-12.1,4.75) (-10,4.85) (-8,4.6) (-5.9, 4.25)};
    
    \draw[red] plot[smooth, tension=.8] coordinates {(-0.1,2.75) (2,2.85) (4,2.6) (6.1, 2.25)};
    \draw[red, opacity=0.1] plot[smooth, tension=.8] coordinates {(3,2.77) (4,2.8) (5.85, 2.95)};
    
    \draw[] plot[smooth, tension=.8] coordinates {(0,0) (2,0.1) (4,-0.15) (6, -0.5)};
    \draw[] plot[smooth, tension=.8] coordinates {(3,0.02) (4,0.05) (5.75, 0.4)};
    \draw[] plot[smooth, tension=.8] coordinates {(0,2) (2,2.1) (4,1.85) (6, 1.5)};
    \draw[opacity=0.1] plot[smooth, tension=.8] coordinates {(3,2.02) (4,2.05) (5.75, 2.4)};
    \draw[] plot[smooth, tension=.8] coordinates {(-0.25,3.5) (2,3.6) (4,3.35) (6.2, 3)};
    \draw[] plot[smooth, tension=.8] coordinates {(3,3.52) (4,3.55) (6, 3.9)};
    
    \draw[] (-0.25,3.5)--(0,2);
    \draw[] (6.2,3)--(6,1.5);
    \draw[] (6, 3.9)--(5.87,3.04);
    \draw[opacity=0.1] (5.75, 2.4)--(5.87,3.04);
    
    \node[] at (-0.9,0) {\footnotesize{$\textbf{M}$}};
    \node[] at (-0.9,2.75) {\footnotesize{$\textbf{E}$}};    
    
    \fill(3,0) circle[radius=0.08cm] {};
    \draw[] (3,3.5)--(3,2);
    \draw[, <-] (3,1.5)--(3,0.5);
    \node[] at (3.35, 1) {\footnotesize{$\boldsymbol{s}$}};
     
\end{tikzpicture}  
    \caption{A semi-local depiction of the formation of sections of non-Hausdorff bundles.}
    \label{FIG: gluing sections}
\end{figure}
    
These abstract bundle-theoretic arguments can be applied to the tangent bundle $T\textbf{M}$ in order to describe the vector fields over $\textbf{M}$. To begin with, one can show that the tangent bundle $T\textbf{M}$ is canonically isomorphic to the colimit of the bundles $TM_1$ and $TM_2$, glued along $TA$ via the gluing (bundle) map $df: TA \rightarrow TM_2$. Since $f$ is an open embedding, its differential $df$ is a bundle embedding, and thus falls under the scope of Theorem \ref{THM: sections are fibre product}. We may then conclude that any vector field $\textbf{v}$ in $\Gamma(T\textbf{M})$ can be uniquely described by a pair of vector fields $v_i$ in $\Gamma(TM_i)$ that agree once restricted to $A$. The higher-rank tensorial bundles also admit a similar colimit construction (cf. \cite[Sec. 2]{o2023vectorbun}), and the tensor fields on the non-Hausdorff manifold $\textbf{M}$  may therefore be described with the fibre product formula of Theorem \ref{THM: sections are fibre product}.

\subsection{Integration}
In our discussions thus far we have been identifying conditions under which locally-defined data may be described in the non-Hausdorff case, ultimately by a transfer of the Hausdorff data under the canonical maps $\phi_i$. Once the correct gluing conditions were identified, our discussion was somewhat intuitive and unproblematic. However, despite being locally-equivalent to Hausdorff manifolds, there is a significant issue when passing from local to global structures. \\

In Hausdorff differential geometry, a useful method for pasting together locally-defined objects is via partitions of unity. The precise definition is not necessary for our purposes, but their utility should not be understated: partitions of unity are used in various constructions and arguments for manifolds, including the locality of derivative operators, the construction of Riemannian metrics and their Levi-Civita connections, and various arguments involving de Rham cohomology. In the non-Hausdorff case, we have the following inconvenient fact. 

\begin{theorem}[{\cite{o2023deRham}}]\label{THM: no partitions of unity}
    If $\textbf{M}$ be a non-Hausdorff manifold built according to Theorems \ref{THM: topological nh manifold} and \ref{THM: smooth nh manifold}, then there is no partition of unity subordinate to any cover of $\textbf{M}$ by Hausdorff open sets. 
\end{theorem}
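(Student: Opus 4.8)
The plan is to argue by contradiction, exploiting the single structural feature that distinguishes $\textbf{M}$ from an ordinary manifold: the presence of a non-separable pair of points. By Theorem \ref{THM: topological nh manifold}, the Hausdorff-violating points of $\textbf{M}$ sit at the boundary of the image of $A$, so I may fix a pair $p = \phi_1(x)$ and $q = \phi_2(\overline{f}(x))$, with $x \in \partial A$, that cannot be separated by disjoint open neighbourhoods in $\textbf{M}$. Concretely, choosing a sequence $a_n \in A$ with $a_n \to x$ in $M_1$, the identified points $z_n := \phi_1(a_n) = \phi_2(f(a_n))$ converge simultaneously to both $p$ and $q$, realising them as distinct limits of one sequence in the sense of \cite{o2023nonHausAdj}.

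The linchpin is the observation that any continuous function $g : \textbf{M} \to \mathbb{R}$ must satisfy $g(p) = g(q)$. Indeed, $\mathbb{R}$ is Hausdorff, so if $g(p) \neq g(q)$ one could pull back disjoint intervals around $g(p)$ and $g(q)$ to produce disjoint open neighbourhoods of $p$ and $q$, contradicting their non-separability. Now suppose, for contradiction, that $\{U_\alpha\}$ is a cover of $\textbf{M}$ by Hausdorff open sets admitting a subordinate partition of unity $\{\rho_\alpha\}$. Since each $\rho_\alpha$ is continuous, $\rho_\alpha(p) = \rho_\alpha(q)$ for every $\alpha$. Because $\sum_\alpha \rho_\alpha(p) = 1$, there is at least one index $\alpha_0$ with $\rho_{\alpha_0}(p) > 0$, hence also $\rho_{\alpha_0}(q) > 0$. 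The subordination condition $\operatorname{supp}(\rho_{\alpha_0}) \subseteq U_{\alpha_0}$ then forces both $p$ and $q$ to lie in $U_{\alpha_0}$.

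The contradiction is completed by the auxiliary fact that a Hausdorff open subset of $\textbf{M}$ can contain at most one member of any non-separable pair: if $p, q \in U_{\alpha_0}$ and $U_{\alpha_0}$ is Hausdorff, we could separate them by open sets of the subspace $U_{\alpha_0}$, which---$U_{\alpha_0}$ being open in $\textbf{M}$---are themselves open in $\textbf{M}$, again contradicting the non-separability of $p$ and $q$. I expect the only delicate point to be making this last step uniform over \emph{all} Hausdorff open covers rather than just the canonical cover $\{\phi_1(M_1), \phi_2(M_2)\}$; the argument above sidesteps that pitfall precisely because it never uses anything about the $U_\alpha$ beyond their being Hausdorff and open, so it applies verbatim to an arbitrary such cover. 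The remaining bookkeeping---that $\{\rho_{\alpha} > 0\} \subseteq \operatorname{supp}(\rho_\alpha)$ and that local finiteness renders the pointwise sum meaningful---is routine.
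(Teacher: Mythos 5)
Your proof is correct and complete: the linchpin observation that a continuous real-valued function must take equal values on a Hausdorff-inseparable pair (since $\mathbb{R}$ is Hausdorff), combined with subordination and the fact that an open Hausdorff subset of $\textbf{M}$ cannot contain both members of such a pair, yields the contradiction for an \emph{arbitrary} Hausdorff open cover, exactly as required. The paper itself states this theorem without proof, citing \cite{o2023deRham}, and your argument is essentially the standard one given in that reference, so there is nothing to correct.
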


In Hausdorff differential geometry, the integral of a compactly-supported differential form is performed by decomposing the form into several local charts, performing the integrals in Euclidean space, and then summing over the results via a partition of unity. In this context, the partition of unity is required in order to avoid overcounting the integral on overlapping charts. \\ 

In the non-Hausdorff setting we do not have access to partitions of unity in full generality, so we will need to define integration on a non-Hausdorff manifold by alternate means. Instead of appealing to partitions of unity, we will follow the so-called ``integration over parametrizations", found in say \cite[Chpt. 13]{lee2013smooth}. Roughly put, in this scheme a total integral is broken down into integrals over certain open sets whose closures cover the support of the differential form, in such a way that adjacent regions only intersect at their measure-zero boundaries. With this intersection property there is no risk of overcounting the integral, and thus partitions of unity are not required. \\

Suppose that we have some compactly-supported differential form $\boldsymbol{\omega}$ on $\textbf{M}$, and consider a collection $\{ r_i: U_i \rightarrow \mathbf{M} \}$ of finitely-many open domains of integration $U_i$ that are mapped diffeomorphically into $\textbf{M}$ such that orientations are preserved. Provided that the union $\bigcup_i r_i(U_i)$ cover the support of $\boldsymbol{\omega}$ and the sets $r_i(U_i)$ pairwise intersect on at most their boundaries in $\textbf{M}$, we may define the integral of $\boldsymbol{\omega}$ in $\textbf{M}$ to be $$  \int_{\textbf{M}} \boldsymbol{\omega} = \sum_i \int_{U_i} r_i^*\boldsymbol{\omega}.   $$
Although the above is technically a well-defined notion of integration, it is not particularly useful for our needs. What is more helpful for us is the following, which relates the integral of a form on $\textbf{M}$ to the ordinary integrals over the Hausdorff submanifolds $M_1$, $M_2$ and $A$.

\begin{theorem}[{\cite[Thm. 2.6]{o2023nonHausAdj}}]\label{THM: Integration formula}
    Let $\boldsymbol{\omega}$ be a compactly-supported differential form on $\textbf{M}$. Then $$  \int_{\textbf{M}} \boldsymbol \omega = \int_{M_1} \omega_1 + \int_{M_2} \omega_2 - \int_{\overline{A}} \omega_A,     $$ 
    where $\overline{A}$ is the topological closure of $A$ within $M_1$, and $\omega_i := \phi_i^* \boldsymbol{\omega}$ and $\omega_{\overline{A}}:= \iota_{\overline{A}}\circ \phi_1^* \boldsymbol{\omega}= \overline{f}\circ \phi_2^*\boldsymbol{\omega} $. 
\end{theorem}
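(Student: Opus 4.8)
The plan is to reduce the non-Hausdorff integral, defined via integration over parametrizations, to ordinary Hausdorff integrals by exhibiting an explicit cover of $\operatorname{supp}\boldsymbol{\omega}$ whose pieces overlap only along measure-zero boundaries, and then identifying the overlap contribution with an integral over $\overline{A}$. Recall that under the canonical embeddings $\phi_i\colon M_i \to \textbf{M}$ the two Hausdorff submanifolds sit inside $\textbf{M}$ as open sets whose union is all of $\textbf{M}$, and their ``shared'' region is precisely the glued copy of $A$. The source of the correction term is geometric: $\phi_1(M_1)$ and $\phi_2(M_2)$ genuinely overlap on the image of the open set $A$, so naively adding $\int_{M_1}\omega_1 + \int_{M_2}\omega_2$ double-counts the contribution coming from $A$; the subtraction $-\int_{\overline{A}}\omega_A$ removes this overcount, and the passage from $A$ to its closure $\overline{A}$ is harmless since the boundary $\overline{A}\setminus A$ has measure zero.

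First I would set up the parametrizations. Using Theorem~\ref{THM: topological nh manifold} (and its smooth upgrade, Theorem~\ref{THM: smooth nh manifold}) the Hausdorff-violating locus of $\textbf{M}$ is exactly the unidentified boundary, and $\textbf{M}$ is covered by the two open charts $\phi_1(M_1)$ and $\phi_2(M_2)$. I would choose domains of integration realizing the three disjoint-up-to-boundary regions: the part of $M_1$ strictly outside $\overline{A}$, the glued copy of $A$ (taken once, say through $\phi_1$), and the part of $M_2$ strictly outside $f(\overline{A})$. Because $\phi_i$ are orientation-preserving open embeddings, each restricts to an orientation-preserving diffeomorphism onto its image, so these qualify as admissible parametrizations $r_i$, and by construction their images pairwise meet only along the images of the boundary $\partial A$, which is measure zero.

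Next I would apply the definition of $\int_{\textbf{M}}\boldsymbol{\omega}$ over these parametrizations and pull back along each $\phi_i$. On the piece coming from $M_1$ this yields $\int_{M_1 \setminus \overline{A}} \omega_1$; on the piece coming from $M_2$ it yields $\int_{M_2 \setminus f(\overline{A})} \omega_2$; and the glued $A$-piece yields $\int_{A}\omega_A$, where $\omega_A$ is the common restriction guaranteed by the gluing condition $\iota_{\overline{A}}^*\phi_1^*\boldsymbol{\omega} = \overline{f}^*\phi_2^*\boldsymbol{\omega}$ (this compatibility is exactly the fibre-product agreement of Theorem~\ref{THM: sections are fibre product} applied to the bundle of top forms). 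I would then complete each partial integral to the full submanifold by adding and subtracting the contribution over $\overline{A}$, using that $\int_{M_i} = \int_{M_i\setminus\overline{A}} + \int_{\overline{A}}$ (boundary terms being null), so that $\int_{M_1\setminus\overline{A}}\omega_1 + \int_{A}\omega_A + \int_{M_2\setminus f(\overline{A})}\omega_2 = \int_{M_1}\omega_1 + \int_{M_2}\omega_2 - \int_{\overline{A}}\omega_A$, which is the claimed formula.

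The main obstacle I anticipate is not any single calculation but the bookkeeping of the overlap: one must verify that the three chosen parametrizations genuinely cover $\operatorname{supp}\boldsymbol{\omega}$ and that their images intersect only on measure-zero sets, which hinges on the hypothesis that $f$ extends to a \emph{closed} embedding of $\overline{A}$ so that the boundary of $A$ is properly controlled in both $M_1$ and $M_2$. The other delicate point is well-definedness: since the glued $A$-piece could equally be parametrized through $\phi_2$, I would need the compatibility condition $\omega_A = \iota_{\overline{A}}^*\omega_1 = \overline{f}^*\omega_2$ to ensure the value is independent of this choice, and to confirm that the final expression is independent of the particular parametrizations used, which follows from the general invariance of integration over parametrizations established in the Hausdorff theory.
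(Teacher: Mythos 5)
Your proposal is correct and takes essentially the same route the paper intends: the paper imports this theorem from \cite[Thm.~2.6]{o2023nonHausAdj} and supplies only the integration-over-parametrizations framework, of which your three-region decomposition $\phi_1(M_1\setminus\overline{A})$, $\phi_1(A)$, $\phi_2(M_2\setminus f(\overline{A}))$ (pairwise disjoint, with closures covering $\operatorname{supp}\boldsymbol{\omega}$ since the closure of $\phi_1(A)$ in $\textbf{M}$ contains both Hausdorff-violating copies of $\partial A$) is exactly the natural instantiation, followed by the same measure-zero bookkeeping $\int_A\omega_A=\int_{\overline{A}}\omega_A$ and change of variables along $\overline{f}$. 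One small remark: the compatibility $\iota_{\overline{A}}^{\,*}\omega_1=\overline{f}^{\,*}\omega_2$ follows directly from $\phi_1=\phi_2\circ f$ on $A$ together with smoothness on the dense subset $A\subseteq\overline{A}$, so your appeal to Theorem~\ref{THM: sections are fibre product} is valid but heavier machinery than needed.
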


An important distinction between Theorem \ref{THM: Integration formula} and the standard subadditivity property of Hausdorff integration is the inclusion of the additional boundary of the subspace $A$. Heuristically, we need to ensure that the restriction of $\boldsymbol \omega$ to $A$ is compactly supported, and the only way to do this is to include the boundary of $A$ within the integral. It may seem that the inclusion of this extra component is an innocuous prescription, given that boundaries are of measure zero. However, as we will see in Section 3, this extra boundary component will have some far reaching consequences for the non-Hausdorff Gauss-Bonnet theorems. 

%



\subsection{Metrics and Curvature}
We may construct metrics of arbitrary signature on a non-Hausdorff manifold by gluing together the spaces $M_1$ and $M_2$ along an isometry. In a global picture, we may view two metrics on the $M_i$ as sections of the appropriate tensor bundle for which the overlap condition of Theorem \ref{THM: sections are fibre product} manifests as an isometric equivalence on the gluing region $A$. It should be noted that there is no issue regarding the regularity of the resulting non-Hausdorff metric -- Theorem \ref{THM: sections are fibre product} ensures that any Lorentzian metric, viewed as a global section of the appropriate tensor bundle, is indeed smooth everywhere.  \\

Despite the non-existence of partitions of unity, affine connections may still be constructed in the non-Hausdorff setting. In global notation, an affine connection on $\textbf{M}$ is defined as per usual, that is, as a bilinear operator $$ \boldsymbol{\nabla}: \Gamma(T\textbf{M}) \times \Gamma(T\textbf{M}) \rightarrow \Gamma(T\textbf{M}), \ \boldsymbol{\nabla}(\textbf{v}, \textbf{w}) \mapsto \boldsymbol{\nabla}_{\textbf{v}} \textbf{w} $$
that is $C^\infty(\textbf{M})$-linear in the first argument, and satisfies the Leibniz rule: $ \nabla_\textbf{v}(\textbf{f}\textbf{w}) = \textbf{f}\nabla_\textbf{v}(\textbf{w}) + \mathcal{L}_\textbf{v}(\textbf{f}) \textbf{w} $ for all $\textbf{f} \in C^\infty(\textbf{M})$ and $\textbf{v}, \textbf{w} \in \Gamma(T\textbf{M})$.\footnote{The Lie derivative used here is defined as in the Hausdorff case, that is, via local flows of the vector fields. It can be shown that in the non-Hausdorff case, the Lie derivative is still a local operator that satisfies $\mathcal{L}_{\textbf{v}}\textbf{w} = [\textbf{v}, \textbf{w}]$ -- see Section 3.1 of \cite{o2023deRham} for a detailed discussion.} As with the non-Hausdorff sections of Section 2.2, it can be shown that any connection $\boldsymbol{\nabla}$ on $\textbf{M}$ will restrict to a pair affine connections $\nabla^i:=\phi_i^*\boldsymbol{\nabla}$ on $M_i$ that agree once mutually pulled back to $A$. As the following result states, the converse is also true: a pair of connections on $M_i$ may be ``glued" together to define a connection on $\textbf{M}$. 

\begin{lemma}[{\cite{o2023deRham}}]
    Suppose $\nabla_i$ are a pair of affine connections defined on the manifolds $M_i$. If $\iota_A^*\nabla_1 = f^*\nabla_2$ on $A$, then $\boldsymbol{\nabla}$ defined by 
    $$ (\boldsymbol{\nabla}_{\textbf{v}}\textbf{w})(\textbf{x}) = \begin{cases}
        \phi_1\left( (\nabla_1)_{(\phi_1^*\textbf{v})}(\phi_1^* \textbf{w})(\phi_1^{-1}(\textbf{x})) \right) & \text{if} \ \textbf{x} \in \phi_1(M_1) \subseteq \textbf{M} \\
        \phi_2\left( (\nabla_2)_{(\phi_2^*\textbf{v})}(\phi_2^* \textbf{w})(\phi_2^{-1}(\textbf{x})) \right) & \text{if} \ \textbf{x} \in \phi_2(M_2) \subseteq \textbf{M}
    \end{cases} $$ is an affine connection on $\textbf{M}$. 
\end{lemma}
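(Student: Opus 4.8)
The plan is to reduce the statement to two checks: that the piecewise prescription genuinely defines a global smooth section $\boldsymbol{\nabla}_{\textbf{v}}\textbf{w} \in \Gamma(T\textbf{M})$, and that the resulting operator obeys the three defining axioms of an affine connection. Since the canonical embeddings satisfy $\phi_1(M_1) \cup \phi_2(M_2) = \textbf{M}$, the two cases in the definition cover all of $\textbf{M}$, and because each $\phi_i$ is a local diffeomorphism (Theorem \ref{THM: smooth nh manifold}), the pullbacks $\phi_i^*\textbf{v}$ and $\phi_i^*\textbf{w}$ are honest smooth vector fields on $M_i$, so each branch $(\nabla_i)_{\phi_i^*\textbf{v}}(\phi_i^*\textbf{w})$ is a well-defined smooth vector field on $M_i$. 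By construction the formula then reads $\phi_i^*(\boldsymbol{\nabla}_{\textbf{v}}\textbf{w}) = (\nabla_i)_{\phi_i^*\textbf{v}}(\phi_i^*\textbf{w})$ on each piece, so the content of the lemma is entirely about consistency on the overlap.

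The crux, and the step I expect to be the main obstacle, is verifying that the two branches agree on the overlap $\phi_1(A) = \phi_2(f(A))$ --- this is exactly where the hypothesis $\iota_A^*\nabla_1 = f^*\nabla_2$ is needed. I would fix a point $\textbf{x}$ in the overlap, write $\textbf{x} = \phi_1(a) = \phi_2(f(a))$ for $a \in A$, and use the colimit relation $\phi_1 = \phi_2 \circ f$ on $A$, which differentiates to $d\phi_1 = d\phi_2 \circ df$ on $TA$. Setting $v_i := \phi_i^*\textbf{v}$ and $w_i := \phi_i^*\textbf{w}$, the fibre-product description of vector fields (Theorem \ref{THM: sections are fibre product} applied to $T\textbf{M} \cong TM_1 \cup_{df} TM_2$) guarantees $\iota_A^* v_1 = f^* v_2$ and $\iota_A^* w_1 = f^* w_2$. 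Using locality of $\nabla_1$ to restrict it to the open set $A$, the hypothesis in its unwound form $(\nabla_1|_A)_X Y = f^*\big((\nabla_2)_{f_*X}(f_*Y)\big)$, applied to $X = v_1|_A$ and $Y = w_1|_A$, yields after pushing forward by $df$ the pointwise identity $(df)_a\big((\nabla_1)_{v_1}w_1|_a\big) = (\nabla_2)_{v_2}w_2|_{f(a)}$. Composing with $d\phi_2$ and invoking $d\phi_1 = d\phi_2 \circ df$ shows that branch one, $(d\phi_1)_a\big((\nabla_1)_{v_1}w_1|_a\big)$, equals branch two, $(d\phi_2)_{f(a)}\big((\nabla_2)_{v_2}w_2|_{f(a)}\big)$, at $\textbf{x}$. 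With the two branches matching on the overlap, Theorem \ref{THM: sections are fibre product} assembles them into a single smooth section of $T\textbf{M}$.

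It then remains to check the three connection axioms, which I expect to be routine bookkeeping. Because a section of $T\textbf{M}$ vanishes precisely when both of its pullbacks to $M_1$ and $M_2$ vanish, every identity for $\boldsymbol{\nabla}$ may be verified after applying $\phi_1^*$ and $\phi_2^*$ and thereby reduced to the known axioms for $\nabla_1$ and $\nabla_2$. Concretely, $\phi_i^*$ is a local diffeomorphism and so commutes with $\mathbb{R}$-linear combinations, with multiplication by pulled-back functions, and with the Lie derivative in the sense $\phi_i^*(\mathcal{L}_{\textbf{v}}\textbf{f}) = \mathcal{L}_{\phi_i^*\textbf{v}}(\phi_i^*\textbf{f})$; hence $\mathbb{R}$-bilinearity, $C^\infty(\textbf{M})$-linearity in the first slot, and the Leibniz rule in the second slot each descend immediately from the corresponding properties of the Hausdorff connections $\nabla_i$. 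This completes the argument, with the only genuine work concentrated in the overlap consistency of the preceding paragraph.
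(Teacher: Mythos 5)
Your proof is correct and takes the same route the paper itself indicates: the paper gives no detailed proof (deferring to \cite{o2023deRham}), remarking only that the hypothesis $\iota_A^*\nabla_1 = f^*\nabla_2$ is precisely what guarantees well-definedness on the overlap, since $\phi_1(M_1)\cap\phi_2(M_2) = \phi_1(A)\cap\phi_2(f(A))$. Your overlap computation via $\phi_1|_A = \phi_2\circ f$ and the fibre-product description of Theorem \ref{THM: sections are fibre product}, followed by verifying the connection axioms by pulling back along the local diffeomorphisms $\phi_i$, is exactly the natural completion of that sketch.
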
     
Observe that in the above, the assumption $\iota_A^*\nabla_1 = f^*\nabla_2$ is precisely what is needed to ensure that $\boldsymbol{\nabla}$ is well-defined, since by construction $\phi_1(M_1)\cap \phi_2(M_2) = \phi_1(A)\cap \phi_2(f(A))$. 
Using the above prescription, it can be shown that a Levi-Civita connection exists for any metric on $\textbf{M}$. Heuristically, although we don't have full access to partitions of unity for $\textbf{M}$, we \textit{do} have full access for the Hausdorff manifolds $M_i$. Therefore, we may construct pieces of the Levi-Civita connection on each $M_i$, and requiring that $f:A\rightarrow M_2$ be an isometric embedding ensures that these Hausdorff Levi-Civita connections may be transferred into $\textbf{M}$. Following on from this, we may define familiar geometric quantities such as the Riemann curvature tensor, the Ricci tensor and the Ricci scalar in the non-Hausdorff case, ultimately by appealing to the fact that $\textbf{M}$ is locally isometric to both $M_i$ \cite{o2023deRham}. \\

In a similar spirit, orientations of the manifolds $M_i$ may be glued in a manner consistent with Theorem \ref{THM: sections are fibre product} to yield an orientation of the non-Hausdorff manifold $\textbf{M}$. In such a situation, the canonical maps $\phi_i: M_i \rightarrow \textbf{M}$ become orientation-preserving isometries. \\

With an eye towards Section 4, we finish this section with a brief application of these ideas. Suppose for a moment that $\textbf{M}$ is a two-dimensional non-Hausdorff manifold with Riemannian metric $\textbf{h}$. The canonical maps $\phi_i:M_i \rightarrow \textbf{M}$ act as isometric embeddings, which means that $\textbf{M}$ may be locally isometric to either $M_i$, depending on where you are in the manifold. According to the integral formula of Theorem \ref{THM: Integration formula}, the total scalar curvature of $\textbf{M}$ may be written as 
\begin{equation}\label{EQ: NH scalar curvature in general}
    \int_{\textbf{M}} R \  dA  = \int_{M_1} R \ dA+ \int_{M_2} R \ dA - \int_{\overline{A}} R  \ dA 
\end{equation}
where here the metrics on the Hausdorff manifolds are the pullbacks of $\textbf{h}$ by the relevant maps, and each Ricci scalar and area form is computed via the pulled-back versions of $\textbf{h}$. 

\section{Gauss-Bonnet in Various Forms}
The Gauss-Bonnet theorem is a powerful result that relates the total scalar curvature of a two-dimensional manifold to its Euler characteristic. As explained in the introduction, our main strategy for evaluating the gravitational action of the non-Hausdorff trousers space will be via this particular theorem. As such, we will now spend some time discussing various versions of the Gauss-Bonnet theorem. In distinction to Section 2, throughout this section we will assume that all manifolds, Hausdorff or otherwise, are two-dimensional. \\

In Euclidean signature, the Gauss-Bonnet theorem for a manifold $(M,h)$ with boundary may be stated as the equality 
\begin{equation}\label{EQ: Euclidean GBT}
    2\pi \chi(M) = \frac{1}{2}\int_M R dA + \int_{\partial M} k d\gamma + \sum \theta_{ext},
\end{equation}  
where here $\chi(M)$ is the Euler characteristic of $M$,\footnote{Here we may define the Euler characteristic to be the alternating sum of the ranks of the simplicial homology groups of $M$, or equivalently as $\chi(M) = V_{\mathcal{T}} - E_{\mathcal{T}} + F_{\mathcal{T}}$ for any triangulation $\mathcal{T}$ of $M$.} and the boundary $\partial M$ is assumed to consist of finitely many piecewise-smooth connected components \cite{do2016differential}. The expressions $\theta_{ext}$ denote the exterior angles between adjacent smooth segments of a boundary component. Geometrically, these are the angles by which a vector must instantaneously turn at the non-smooth corners, as pictured in Figure \ref{FIG: turning angles}. We will often borrow from physics parlance and refer to these exterior angles as \textit{corner terms}. Given that the boundary $\partial M$ is a piecewise-smooth curve, the corner term $\theta_{ext}$ at a vertex $p$ is computed by taking the one-sided derivatives of the adjacent curves and computing the angle between the corresponding vectors lying in the tangent space $T_pM$.  \\

We will now set about modifying this theorem in two orthogonal directions: firstly, we will review the so-called \textit{Lorentzian Gauss-Bonnet theorem}, and then we will generalise everything to the non-Hausdorff setting. As mentioned in the introduction, there is a conceptual difficulty when proving the Lorentzian Gauss-Bonnet theorem, ultimately stemming from the ill-defined notion of angle within Minkowski space. So, before getting to any modifications of the Gauss-Bonnet theorem, we will first spend some time reviewing the literature on Lorentzian angles.  

\begin{figure}
    \centering
    \begin{tikzpicture}

    \fill[] (0,0) circle[radius=0.7mm] {};
    \fill[] (0,3) circle[radius=0.7mm] {};
    \fill[] (4,3) circle[radius=0.7mm] {};
    \fill[] (3,0) circle[radius=0.7mm] {};

    \draw[thick] plot[smooth, tension=0.6] coordinates{(0,0) (1,0.3) (3,0)}; 
    \draw[thick] plot[smooth, tension=0.6] coordinates{(3,0) (3.2, 1.5) (4,3)};
    \draw[thick] plot[smooth, tension=0.6] coordinates{(4,3) (3,3.5) (1,2.8) (0,3)};
    \draw[thick] plot[smooth, tension=0.6] coordinates{(0,3) (0,2) (-0.25,1.2) (0,0)};

    \node[] at (3.28,1.5) {\rotatebox[origin=c]{-20}{$\boldsymbol{\wedge}$ }};
    \node[] at (1.84,3.18) {\rotatebox[origin=c]{118}{$\boldsymbol{\wedge}$ } };
    \node[] at (1.5,0.18) {\rotatebox[origin=c]{-95}{$\boldsymbol{\wedge}$ }};
    \node[] at (-0.27,1.5) {\rotatebox[origin=c]{162}{$\boldsymbol{\wedge}$ }};

    \draw[dashed] (3,0)--(5,-0.35);
    \draw[dashed] (3,0)--(3,2.5);

    \draw[dashed] (4,3)--(5.2,5);
    \draw[dashed] (4,3)--(2.3,4.5);

    \draw[dashed] (0,3)--(-1.8,4);
    \draw[dashed] (0,3)--(0.2,1);
    
    \draw[dashed] (0,0)--(0.6,-2);
    \draw[dashed] (0,0)--(1.65,1);

    \draw[color=red, ->] (0.25,-0.8) to [out=20,in=285] (0.75,0.45);
    \draw[color=red, ->] (-0.65,3.35) to [out=250,in=180]  (0.08,2.25);
    \draw[color=red, ->] (4.1,-0.2) to [out=80,in=10] (3,0.9);
    \draw[color=red, ->] (4.45,3.8) to [out=150,in=45] (3.25,3.65);

    \node[] at (3.6,1.75) {$\gamma_1$};
    \node[] at (1.8,3.5) {$\gamma_2$};
    \node[] at (-0.6,1.5) {$\gamma_3$};
    \node[] at (1.8,-0.2) {$\gamma_4$};

    \node[] at (4,4.3) {$\theta_{12}$};
    \node[] at (-0.5,2.3) {$\theta_{23}$};
    \node[] at (0.8,-0.9) {$\theta_{34}$};
    \node[] at (4.4,0.5) {$\theta_{41}$};
    
\end{tikzpicture}
    \caption{Turning angles for a piecewise-smooth, oriented closed curve $\gamma$ that bounds a flat disk embedded in $\mathbb{R}^2$. Here the total geodesic curvature of $\gamma$ equals $ \int_\gamma k d\gamma = \sum_{i=1}^4 \int_{\gamma_i} k d\gamma + \sum \theta_{ext}. $ The exterior angles (red) are computed using tangent vectors at each marked vertex. Hopf's Umlaufsatz states that the sum of these exterior angles equals $2\pi$, and the Gauss-Bonnet theorem confirms that the Euler characteristic of the disk is equal to $1$.}
    \label{FIG: turning angles}
\end{figure}
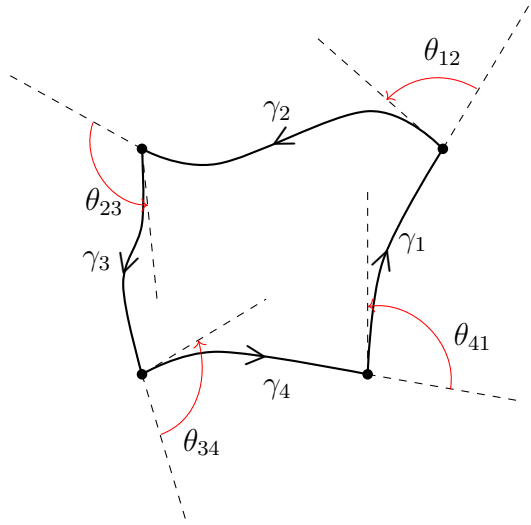

\subsection{Lorentzian Angles}

In a two-dimensional vector space with a fixed metric, the convex angle between a pair of normalised vectors may be determined from the parameter of the isometry transformation that sends one vector into the other. When working with the usual Euclidean metric, the isometry transformation lying in $SO(2)$ is an honest rotation of the unit circle. Up to a preferred orientation, the convex angle between two vectors in Euclidean space is uniquely determined -- ultimately because the action of $SO(2)$ on $\mathbb{R}^2$ is free and transitive. \\

This perspective also partially applies to Minkowski space. When passing to the $(-,+)$-signature of the Minkowski metric, the orbit spaces under $SO(1,1)$ of non-null vectors divide $\mathbb{R}^{1,1}$ into four disjoint quadrants. As such, a pair of Minkowski vectors $u$ and $v$ will admit an unambiguous and well-defined angle $\theta_{uv}$ provided that they are both non-null and are related by a boost. We will label the orbit space of the unit spacelike vector $v=(0,1)$ as Q1, and count the quadrants anticlockwise from there. Following \cite{sorkin2019lorentzian}, we introduce $$ Z(u,v) := u\cdot v +\sqrt{(u\cdot v)^2 - (u\cdot u)(v\cdot v)    } \ \  \text{and}  \ \ \overline{Z}(u,v) = u\cdot v -\sqrt{(u\cdot v)^2 - (u\cdot u)(v\cdot v)    }$$ as useful shorthands. We may then express the angle $\theta_{uv}$ between two spacelike vectors lying in the same quadrant as follows:
    \begin{align}\label{EQ: u,v both spacelike}
        \theta_{uv} & = \log{\frac{Z(u,v)}{|u| |v|}} & \text{if} \  u, v \ \text{both spacelike and in the same quadrant.}  
    \end{align}
This formula may be related to the familiar trigonometric expression of boosts by standard identities -- see \cite{asante2023complex} for the alternate form. Note that for spacelike vectors, the norm $|u| = \sqrt{u\cdot u}$ is real. \\

Since no boost can change the signature of a Minkowski vector, it may appear as though there is no meaningful notion of angle between vectors lying in different quadrants. The now-standard remedy for this issue is to analytically continue the meaningful fragments of angular formulae into the complex plane. With such a procedure, the result is a complex-valued notion of angle. Treatments of complex-valued Minkowski angles exist in various forms in the literature \cite{jia2022complex, asante2023complex, neiman2013imaginary, sorkin2019lorentzian}, though they typically differ in both scope and convention. For the purposes of this paper, we will opt for Sorkin's approach \cite{sorkin2019lorentzian}, since his treatment is sufficiently general so as to include both null vectors and a Gauss-Bonnet theorem. \\

As an illustration, we will now outline the derivation of a complex angle between two Lorentzian vectors $a:= (0,1)$ and $b:=(1,0)$. Using the null basis $m := (\frac{1}{2},-\frac{1}{2})$ and $n:=(\frac{1}{2},\frac{1}{2})$, we may write $a = n-m$ and $b = n+m$. An interpolating vector $c$ lying in the convex wedge between $a$ and $b$ may be described as $ c = m + \lambda n,  $ where $\lambda \in [-1,1]$. Allowing $\lambda$ to smoothly vary from $-1$ to $1$ will trace out a continuous transformation of $a$ into $b$. As we do this, we see that the expression $\frac{Z(a,c)}{|a| |c|}$ will become singular as $c$ becomes null when crossing quadrants at $\lambda=0$. This pole may be avoided by endowing the Minkowski metric with a small positive-definite imaginary part (cf. \cite{louko1997complex}), which adjusts the dot product by $ c\cdot c \rightarrow c\cdot c \pm     i\epsilon$ for vectors $c$ with norm close to zero. This allows us to circumvent the singularity at $\lambda=0$ and continue into the adjacent quadrant without issue.  \\

\begin{figure}
    \centering
\begin{tikzpicture}[scale=0.6]

\draw[dashed] (-5.5,-5.5)--(5.5,5.5);
\draw[dashed] (-5.5,5.5)--(5.5,-5.5);
\draw[dotted] (4,0)--(0,4);

\draw[red, thick, ->] (0,0)--(4,0);
\draw[red, thick, ->] (0,0)--(0,4);
\draw[thick, ->] (0,0)--(2,2);
\draw[thick, ->] (0,0)--(-2,2);
\draw[thick, blue, ->] (0,0)--(3,1);

\node[] at (-2,1.3) {$m$};
\node[] at (1.3,1.8) {$n$};
\node[] at (2.5,-0.35) {$a$};
\node[] at (-0.35,2.5) {$b$};
\node[] at (1.9,1) {$c$};

\node[] at (6,0) {\textsf{Q1}};
\node[] at (0,-6) {\textsf{Q4}};
\node[] at (-6,-0) {\textsf{Q3}};
\node[] at (0,6) {\textsf{Q2}};

\end{tikzpicture}
    \caption{Sorkin's derivation of the Lorentzian angle between a spacelike and a timelike vector. Here we pick the lightcone basis $m:=(\frac{1}{2},-\frac{1}{2})$ and $n=(\frac{1}{2},\frac{1}{2})$, for which $a = n-m$ and $b=n+m$, and $c = n+\lambda m$. We may then continuously trace from $a$ to $b$, using a parameter $\lambda\in[-1,1]$. This parameter will yield a singularity in $\frac{Z(a,c)}{|a| \, |c|}$ at $\lambda=0$, which we may avoid via an $i\epsilon$-regularisation.}
    \label{FIG: Lorentzian angles}
\end{figure}
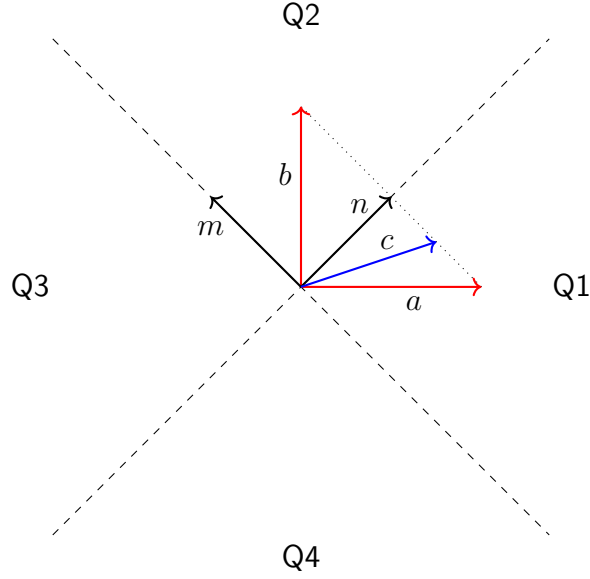

There are two subtleties to consider here. Firstly, we are trying to complexify the ratio $\frac{Z(a,c)}{|a| |c|}$, which means that we need to select a branch of the complex logarithm that will eventually be used as in \eqref{EQ: u,v both spacelike}. Following \cite{sorkin2019lorentzian}, we select the principle branch of the logarithm, so that $log(\frac{1}{i})= \frac{-\pi i}{2}$. Secondly, when performing the circumvention of the pole at $\lambda=0$, the sign of the $i\epsilon$-regulariser will dictate the sign of the (imaginary) norm for timelike vectors: $|u| := \sqrt{u\cdot u} = \pm i \sqrt{|u\cdot u|}$. With this in mind, the Lorentzian angle between our chosen vectors $a$ and $b$, will be purely imaginary:
$$    \theta_{ab}  =  \log\left({\frac{Z(a,b)}{|a| \  |b|}}\right)  = \log\left(\frac{1}{\pm i}\right) = \mp \frac{i\pi}{2},$$
the sign of which depends on the choice of $\pm i\epsilon$. For a general spacelike vector $u$ in Quadrant $1$ and a timelike vector $v$ lying in Quadrant $2$, there may also be a real part of the Lorentzian angle. The general angular formula will be
\begin{equation}\label{EQ: u spacelike, v timelike}
    \theta_{uv} =\log\left({\frac{Z(u,v)}{|u| \ |v|}}\right) =  \log\left({\frac{Z(u,v)}{||u|| \  ||v||}}\right) \mp \frac{\pi i}{2},
\end{equation}
where here in the latter term we isolate the imaginary component of the angle by taking $|| \cdot ||$,  the absolute value of the norm $| \cdot |$. \\

In similar spirit, the Lorentzian angle between a pair of timelike vectors lying in the same quadrant may be determined once we have fixed a sign for the complex quantity $|u| = \sqrt{u \cdot u}$. In contrast to \eqref{EQ: u,v both spacelike}, we have: 
\begin{align}\label{EQ: u,v both timelike}
        \theta_{uv} & = - \log{\frac{\overline{Z}(u,v)}{|u| |v|}} &  \text{if} \  u, v \ \text{both timelike and in the same quadrant.}  
    \end{align}
Using the above, one can interpret the imaginary contribution of $\mp \frac{\pi i}{2}$ in the formula \eqref{EQ: u spacelike, v timelike} as a discrete ``rotation" of the spacelike vector into Quadrant $2$, followed by an application of formula \eqref{EQ: u,v both timelike} to determine the remaining real part of the angle. \\

Generally speaking, the $i\epsilon$-regularisation causes a discrete contribution of $\mp\frac{i\pi}{2}$ whenever we compute angles between non-null vectors in adjacent quadrants of Minkowski space. One can imagine a Euclidean rotation in which we trace one vector through null rays into adjacent quadrants (cf. Figure \ref{FIG: Lorentzian angles}). Every time a vector passes through a null ray it receives a discontinuous contribution of $\mp\frac{\pi i}{2}$. The total angle around the origin will then equal $\mp 2\pi i$, as depicted in \cite{neiman2013imaginary, asante2023complex}. \\

The $i\epsilon$-regularisation of \cite{louko1997complex, sorkin2019lorentzian, asante2023complex, neiman2013imaginary, jia2022complex} may also be used to define angles between null vectors. Combining equation \eqref{EQ: u spacelike, v timelike} with the additivity of angles, and proceeding via a case distinction, the angle involving null vectors may take any of the following forms: 
\begin{align}
    \theta_{m,u} &= \log \left( \frac{m \cdot u}{ l_0||u|| } \right) \mp \frac{i\pi }{4} & \text{if} \ u \ \text{spacelike} \label{EQ: m null, u spacelike} \\
    \theta_{m,v} &= \log \left( \frac{m\cdot v}{ l_0||v|| } \right) \mp \frac{i\pi }{4} & \text{if} \ v \ \text{timelike} \label{EQ: m null, u timelike}\\
    \theta_{m,n} &= \log \left( \frac{m \cdot n}{ l_0^2 } \right) \mp \frac{i\pi }{2} & \text{if} \ n \ \text{null and} \  m,n \ \text{bound spacelike quadrant} \label{EQ: m,n null, spacelike quadrant} \\
    \theta_{m,n} &= -\log \left( \frac{m \cdot n}{ l_0^2 } \right) \mp \frac{i\pi }{2} & \text{if} \ n \ \text{null and} \  m,n \ \text{bound timelike quadrant} \label{EQ: m,n null, timelike quadrant}
\end{align}
Here we follow \cite{sorkin2019lorentzian} and introduced two more conventions. Firstly, due to the additivity of angles, we need to choose how to divvy up the imaginary contribution of $\mp \frac{\pi i}{2}$ arising from \eqref{EQ: u spacelike, v timelike} into the two formulae (7) and (8). In the above we have opted for a balanced contribution of $\mp \frac{\pi i}{4}$ for either side of the null vector. Secondly, we commit to an additional variable $l_0$, which is an arbitrary length scale that is required in order to make the angle formulae dimensionless (cf. the $4d$ corner ambiguites in \cite{jubb2017boundary, lehner2016gravitational}). Although both of these conventions are choices, they will ultimately not affect the Gauss-Bonnet theorem once we keep them consistently fixed.

\subsection{A Gauss-Bonnet Theorem for Surfaces with Null Boundaries}

A Lorentzian version of the Gauss-Bonnet theorem was originally proved implicitly by Chern in \cite{chern1963pseudo}, who extended his famous Chern-Gauss-Bonnet theorem to closed, even-dimensional Lorentzian manifolds. Various alternate versions of the result exist in the literature, and each varies in scope and generality. Older resources such as \cite{avez1963formule, alty1995generalized, law1992neutral, jee1984gauss} prove the theorem by considering manifolds with either empty or non-null boundary. This was generally due to an underdeveloped notion of of Lorentzian angles at their time of writing. In recent years, however, Sorkin \cite{sorkin2019lorentzian} extended the Lorentzian Gauss-Bonnet theorem to include null boundary components. We will now briefly overview his argument, and flesh out some of the more rigorous details. \\

We start by proving a Gauss-Bonnet theorem for the Lorentzian triangle. This version, commonly called the ``local Gauss-Bonnet theorem" in Euclidean terminology, ultimately follows from an analogue of Hopf's \textit{Umlaufsatz} -- cf. the Euclidean version in Figure \ref{FIG: turning angles}. In this context, the Umlaufsatz states that the sum of turning angles around the oriented boundary of a Lorentzian triangle will always equal $\mp 2\pi i$. Heuristically speaking, we can imagine parallel transporting a vector around the boundary of the triangle and summing up the discrete jumps at the corners. Since the triangle is assumed to be flat, the vector will return to itself with no real angular defect, however it will have collected a full rotation of Minkowski angles along its journey. \\

Formally, the Lorentzian Umlaufsatz may be proved via a case distinction on all the different types of triangles. For triangles with non-null edges, the Umlaufsatz was discussed by both Jee \cite{jee1984gauss} and Law \cite{law1992neutral}. For triangles involving null edges, the discussion was continued in \cite{sorkin2019lorentzian}. For the purposes of illustration, we consider the case of a triangle $\Delta$ with two null edges $m$ and $n$, and a spacelike edge $w$. After orienting $\partial \Delta$, we may consider the turning angles by fixing null vectors with an affine length equal to that of the corresponding edge of $\Delta$. Using the additivity of Lorentzian angles, we can represent the sum of turning angles as a sum between purely null edges: 
$$  \theta_{mn} + \theta_{nw} + \theta_{wm} = \theta_{mn} + (\theta_{n(-m)} + \theta_{(-m)w}) + (\theta_{w(-n)} + \theta_{(-n)m}) = \theta_{mn} + \theta_{n(-m)} + \theta_{(-m)(-n)} + \theta_{(-n)m} .        $$
Heuristically, we may imagine parallel transporting all three vectors to the same point, and then using additivity of angles to cancel out the terms containing the spacelike vector $w$. This intuition is depicted in Figure \ref{fig:Lorentzian umlaufsatz}. From this perspective, the Umlaufsatz for $\Delta$ is clear, provided that the ambiguous length scale $l_0$ is kept fixed throughout the sum of the turning angles. We should also note that this version of the Umlaufsatz is equivalent to Sorkin's observation that the interior angles of a Minkowskian triangle will sum up to the flat half-value $\mp i\pi$, since: 
    $$ \mp 2\pi i =  \sum_{i=1}^3 \theta_{ext} = \sum_{i=1}^3 (\mp i\pi - \theta_{int}) = \mp 3\pi i - \sum_{i=1}^3 \theta_{int}.   $$

Using this result for triangles in Minkowski space, we may then triangulate a given almost-Lorentzian surface, making sure to arrange any causal irregularities onto vertices of the triangulation, and then derive a discrete version of the Gauss-Bonnet by the same arguments found in the Euclidean case -- for instance those of \cite[Sec 4.5]{do2016differential}. Provided that we pick the same sign of the $i\epsilon$-regulariser for all the vertices in a given triangulation,\footnote{This assumption is implicit within \cite{sorkin2019lorentzian}, though is explicitly called the ``global Wick rotation" in \cite{asante2023complex}.} we can then express curvature via a Regge action which sums over Lorentzian defect angles. This yields the following version of the Gauss-Bonnet theorem, amenable to two-dimensional Regge calculus in Lorentzian signature.   \\
\begin{figure}
    \centering
\begin{tikzpicture}
    
    \fill[] (4,0) circle (0.7mm);
    \fill[] (-1,-1) circle (0.7mm);
    \fill[] (2,2) circle (0.7mm);

    \draw[->] (-1,-1)--(3.9,0);
    \draw[->] (4,0)--(2.05,1.95);
    \draw[->] (2,2)--(-0.93,-0.93);

    \draw[dotted] (4,0)--(6.5,0.5);
    \draw[dotted] (2,2)--(0.5,3.5);
    \draw[dotted] (0,0)--(-2.5,-2.5);

    \node[] at (1.7,-0.8) {$w$};
    \node[] at (0.5,0.8) {$n$};
    \node[] at (3.25,1.2) {$m$};

    \node[] at (0.85,2) {$\theta_{mn}$};
    \node[] at (-0.5,-1.7) {$\theta_{nw}$};
    \node[] at (4.5,0.9) {$\theta_{wm}$};

    \draw[->] (1.5,2.5) to [out=200,in=155] (1.5,1.5);
    \draw[->] (-1.45,-1.45) to [out=340,in=270] (-0.35,-0.87);
    \draw[->] (4.8,0.15) to [out=100,in=60] (3.5,0.5);

    
    \fill[] (10,1) circle (0.7mm); 
    \draw[->] (10,1)--(15,2);
    \draw[->] (10,1)--(7,-2);
    \draw[->] (10,1)--(8,3);
    \draw[->, color=gray!65] (10,1)--(13,4);
    \draw[->, color=gray!65] (10,1)--(12,-1);

    \node[] at (8.5,1) {$\theta_{mn}$};
    \node[] at (10,2.35) {$\theta_{(-n)m}$};
    \node[] at (10,-0.35) {$\theta_{n(-m)}$};
    \node[] at (11.5,1.65) {$\theta_{w(-n)}$};
    \node[] at (11.65,0.5) {$\theta_{(-m)w}$};

    \draw[->] (10.75,1.75) to [out=135,in=45] (9.25,1.75);
    \draw[->] (9.25,1.75) to [out=215,in=135] (9.25,0.25);
    \draw[->] (9.25,0.25) to [out=315,in=215] (10.75,0.25);
    \draw[->] (10.75,0.25) to [out=45,in=315] (10.75,1.75);

    \node[] at (8,2.5) {$m$};
    \node[] at (7.3,-1.4) {$n$};
    \node[] at (14.5,2.2) {$w$};
    \node[color=gray!65] at (11.35,-1) {$-m$};
    \node[color=gray!65] at (12.3,3.8) {$-n$};

\end{tikzpicture}
    \caption{The intuition behind the Umlaufsatz of a Lorentzian triangle with two null edges. Here we use the point-vector correspondence of flat space to transport all vectors to the same point, and observe the manifest value of $\mp 2\pi i$.}
    \label{fig:Lorentzian umlaufsatz}
\end{figure}
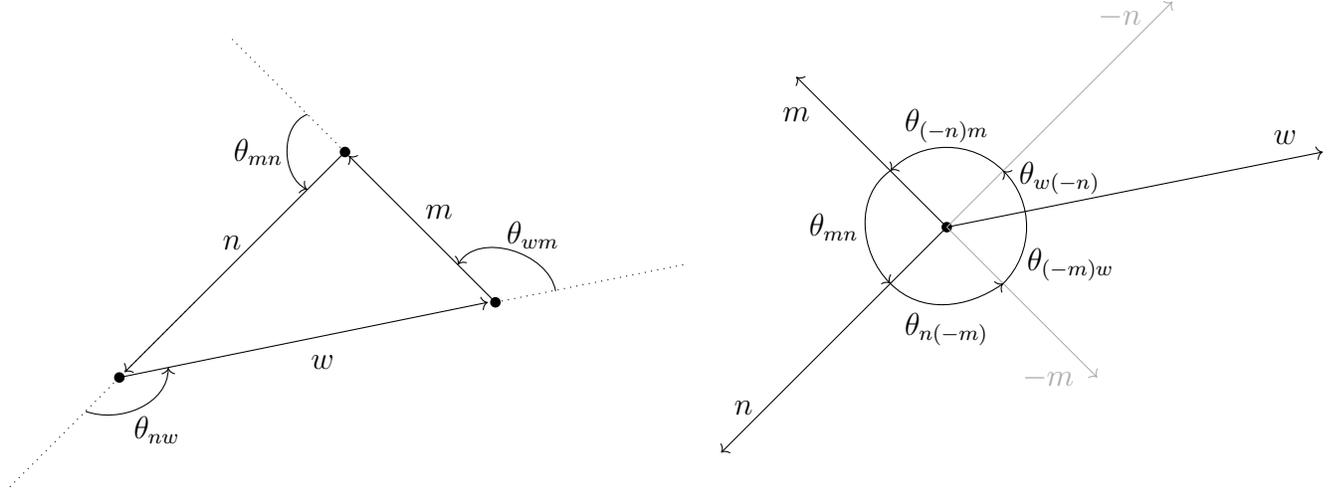
\begin{theorem}[Adapted from {\cite{sorkin2019lorentzian}}]\label{THM: Combinatorial LGBT}
        Let $M$ be an almost-Lorentzian surface with boundary, and let $\mathcal{T}$ be a triangulation of $M$ in which the degenerate points of $M$ lie on vertices of $\mathcal{T}$. Denote by $V_M$ and $V_{\partial M}$ the set of vertices of $\mathcal{T}$ that lie in the bulk and boundary of $M$, respectively. Then $$ \mp 2\pi i \chi(M) = \sum_{p \in V_M} \delta^2(p) + \sum_{q\in V_{\partial M}} \delta^1(q), $$
        where $\delta^2(p) = \mp 2\pi i - \sum \theta(p)$ and $\delta^1(q) = \mp \pi i - \sum \theta(q)$ measure the defect angles in $2d$ and $1d$, respectively.
\end{theorem}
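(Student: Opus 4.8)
The plan is to mimic the classical combinatorial proof of Gauss-Bonnet (as in \cite[Sec.~4.5]{do2016differential}), replacing the Euclidean angle defect $2\pi$ by the Lorentzian value $\mp 2\pi i$ established by the Umlaufsatz. First I would fix an oriented triangulation $\mathcal{T}$ of $M$ with the degenerate points sitting on vertices, and record the standard counting data: let $V$, $E$, $F$ denote the numbers of vertices, edges, and faces, splitting $V = V_M + V_{\partial M}$ into interior and boundary vertices and $E = E_M + E_{\partial M}$ likewise. The key bookkeeping identities are that each triangle has three edges with each interior edge shared by two faces and each boundary edge belonging to one face, giving $3F = 2E_M + E_{\partial M}$, and that the boundary is a disjoint union of closed polygonal curves so $E_{\partial M} = V_{\partial M}$. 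These combine into the familiar relation allowing $\chi(M) = V - E + F$ to be rewritten purely in terms of face and vertex counts.

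Next I would apply the per-triangle Umlaufsatz from the preceding discussion. For each face $T$ the interior angles sum to $\mp i\pi$ (the flat half-value), so summing over all $F$ triangles gives $\sum_{T}\sum \theta_{int} = \mp i\pi F$. The plan is then to regroup this total sum of interior angles according to the vertex at which each angle sits, separating the contributions of interior vertices from those of boundary vertices. At an interior vertex $p$ the surrounding triangle angles fill a full neighbourhood, so their sum is $\mp 2\pi i - \delta^2(p)$ by definition of the $2d$ defect; at a boundary vertex $q$ they fill a half-neighbourhood, contributing $\mp \pi i - \delta^1(q)$. Substituting these groupings into $\mp i\pi F = \sum_{T}\sum\theta_{int}$ converts the equation into one relating $F$, the counts $V_M, V_{\partial M}$, and the two defect sums $\sum_{p}\delta^2(p)$ and $\sum_{q}\delta^1(q)$.

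Finally I would eliminate $F$ and the edge counts using the combinatorial identities above to produce exactly $\mp 2\pi i\,\chi(M)$ on the left-hand side. This is pure linear algebra in the integer counts: one feeds $3F = 2E_M + E_{\partial M}$ and $E_{\partial M} = V_{\partial M}$ into $\chi = V - E + F$ and matches coefficients so that the $\mp 2\pi i$ and $\mp \pi i$ prefactors of $V_M$ and $V_{\partial M}$ assemble correctly. The one subtlety requiring care—and the step I expect to be the genuine obstacle rather than the routine counting—is the consistent treatment of the sign of the $i\epsilon$-regulariser across the entire triangulation. The Umlaufsatz value $\mp 2\pi i$ and all the interior-angle sums carry a sign that is only unambiguous once one commits to a single global choice of $\pm i\epsilon$ (the ``global Wick rotation'' noted in the footnote). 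I would therefore make this global sign assumption explicit at the outset and verify that every invocation of the per-triangle result, and in particular the angle $\mp i\pi$ total per face, uses that same sign; only then do the discrete jumps at null crossings add coherently and the telescoping of angles around each vertex respect the claimed defect definitions. Granting this, the coefficients line up and the theorem follows.
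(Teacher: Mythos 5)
Your proposal is correct and takes essentially the same route as the paper: the paper likewise triangulates with the degenerate points placed on vertices, invokes the Lorentzian Umlaufsatz in the form that the interior angles of a flat Minkowskian triangle sum to $\mp i\pi$, and then defers to exactly the Euclidean combinatorial bookkeeping of \cite[Sec.~4.5]{do2016differential} (your identities $3F = 2E_M + E_{\partial M}$ and $E_{\partial M} = V_{\partial M}$, with angles regrouped by vertex) to produce $\mp 2\pi i\,\chi(M)$. Your explicit insistence on a single global choice of the $i\epsilon$ sign is precisely the paper's one stated caveat (its ``global Wick rotation'' footnote), so nothing is missing.
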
   

We may derive a continuum version of the above by measuring curvature by different means. In the triangulation argument of Regge calculus we flatten the triangles, thereby localising any curvature to vertices. Alternatively, we could also determine the curvature of a triangle $\Delta$ by measuring the defect in the sum of its interior angles. Following \cite{cheeger1984curvature}, by taking a limit of smaller and smaller triangles around a point, we may define an infinitesimal notion of curvature which may then be integrated to determine the total scalar curvature of the bulk. Similarly, for each smooth boundary component, we may determine its geodesic curvature via a limit of defect angles of small lines surrounding points. Thus we obtain the following (smooth) version of the Lorentzian Gauss-Bonnet theorem. 

    \begin{theorem}\label{THM: smooth LGBT}
        Let $M$ be an almost-Lorentzian manifold with boundary. Then $$ \frac{1}{2}\int_M R dA + \int_{\partial M} k d\gamma + \sum \theta_{ext}= \mp 2\pi i \chi(M).   $$
    \end{theorem}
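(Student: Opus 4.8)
The plan is to obtain the smooth statement as a refinement limit of the combinatorial Lorentzian Gauss–Bonnet theorem (Theorem \ref{THM: Combinatorial LGBT}), following the curvature-as-angle-defect philosophy of \cite{cheeger1984curvature}. Since the left-hand side of Theorem \ref{THM: Combinatorial LGBT} equals the triangulation-independent quantity $\mp 2\pi i\,\chi(M)$, I am free to evaluate the right-hand side along any convenient sequence of triangulations and pass to the limit. So first I would fix a sequence $\mathcal{T}_k$ of triangulations of $M$ that (i) all contain the finitely many genuine corners of $\partial M$ among their vertices, (ii) place every degenerate point of $M$ on a vertex, and (iii) refine so that the triangle diameters tend to zero uniformly. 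Crucially, the same sign of the $i\epsilon$-regulariser must be chosen at every vertex of every $\mathcal{T}_k$ (the ``global Wick rotation''), so that the imaginary contributions accumulate coherently and the overall sign on both sides of Theorem \ref{THM: Combinatorial LGBT} remains consistent.

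Next I would analyse the interior defects. For each interior vertex $p$, the quantity $\delta^2(p) = \mp 2\pi i - \sum \theta(p)$ measures the failure of the interior angles of the triangles meeting at $p$ to sum to the flat value $\mp 2\pi i$. By the Lorentzian analogue of the Cheeger estimate, this defect is to leading order the integral of the Ricci scalar over the polygonal region dual to $p$, so summing over interior vertices and letting $k\to\infty$ produces a Riemann sum converging to $\sum_{p\in V_M}\delta^2(p)\to \tfrac{1}{2}\int_M R\, dA$. Here $R$ is precisely the continuum curvature defined in Section 2.4 as the limit of such angular defects, so the identification is by construction rather than by an independent computation.

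I would then treat the boundary defects, splitting $V_{\partial M}$ into smooth points of $\partial M$ and the genuine corners. At a boundary vertex $q$ lying on a smooth arc the flat reference value is $\mp\pi i$ (the half-plane value), and $\delta^1(q) = \mp\pi i - \sum\theta(q)$ records the infinitesimal turning of the boundary; by the one-dimensional version of the same defect-to-curvature limit, these contributions sum to the total geodesic curvature $\int_{\partial M} k\, d\gamma$. At a genuine corner the boundary is not smooth, so a finite deficit survives even in the limit, and—because each corner is a vertex of every $\mathcal{T}_k$—it contributes exactly the exterior angle $\theta_{ext}$ computed from the one-sided tangent vectors. Adding the three pieces and equating with $\mp 2\pi i\,\chi(M)$ yields the claimed identity.

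The hard part will be making the two convergence claims rigorous in the complex-angle Lorentzian regime rather than merely quoting their Riemannian counterparts. Concretely, one must verify that the Lorentzian defect of a small geodesic triangle equals the enclosed $\tfrac12 R\,dA$ up to higher order, so that the Riemann sums genuinely converge, and that the branch choices and angle-additivity rules of Section 3.1 are stable under refinement—in particular that the discrete reference values $\mp 2\pi i$ and $\mp\pi i$ remain the correct flat limits for the complexified metric and are unaffected by how many auxiliary null-crossings a refinement introduces. The cleanest way to control this is to keep both the length scale $l_0$ and the regulariser sign fixed throughout, as in Section 3.1, so that all spurious $\mp\tfrac{i\pi}{4}$ and $l_0$-dependent terms cancel within each local sum of turning angles before the limit is taken.
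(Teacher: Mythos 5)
Your proposal follows essentially the same route as the paper: starting from the combinatorial theorem of Sorkin (Theorem \ref{THM: Combinatorial LGBT}) with a fixed global sign of the $i\epsilon$-regulariser, and passing to a refinement limit in which bulk angular defects converge to $\tfrac{1}{2}\int_M R\, dA$ \`a la \cite{cheeger1984curvature}, smooth boundary defects converge to $\int_{\partial M} k\, d\gamma$, and genuine corners survive as the finite exterior angles $\theta_{ext}$. Your version is in fact more careful than the paper's sketch (which does not spell out the triangulation sequence, the corner-vertex condition, or the stability of branch choices and the $l_0$ convention under refinement), and the only slip is attributing the defect-angle definition of curvature to Section 2.4 rather than to the discussion following Theorem \ref{THM: Combinatorial LGBT}.
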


\subsection{Non-Hausdorff Gauss-Bonnet Theorems}

We can repeat the philosophy of Section 2 in order to prove a Gauss-Bonnet theorem for non-Hausdorff manifolds. Our approach will be to first relate the Euler characteristic of a non-Hausdorff manifold to the Euler characteristics of its Hausdorff submanifolds. After this, we may apply either the Euclidean or Lorentzian Gauss-Bonnet Theorem to all of these Hausdorff pieces, and then collate the resulting integrals into a global non-Hausdorff curvature term using the subadditivity formula of Theorem \ref{THM: Integration formula}. \\
 
In our discussion thus far we have assumed some simplicial definitions of the Euler characteristic. However, since all simplicial complexes are by construction Hausdorff topological spaces, we do not have access to simplicial homology for non-Hausdorff manifolds. There are several ways to bypass this issue. It is reasonable to suggest that there may be a non-Hausdorff version of simplicial homology, with a well-defined boundary operator that may in turn yield some kind of topological invariants. However, we will avoid this line of inquiry and instead appeal to a broader definition of Euler characteristic that holds for more-general topological spaces. We define the Euler characteristic of a non-Hausdorff $n$-dimensional manifold $\textbf{M}$ as follows: 
    $$\chi(\textbf{M}) = \sum_{i=1}^n(-1)^i\textrm{rank}(H^{S}_i(\textbf{M})), $$
where here we use the singular homology groups $H^S_i (\textbf{M})$. In the Hausdorff setting, this definition of Euler characteristic coincides with the ordinary definition via triangulations, due to the equivalence between simplicial and singular homology \cite{hatcher2002algebraic}. In the case of smooth non-Hausdorff manifolds, singular cohomology is known to be isomorphic to de Rham cohomology for any manifold satisfying the conditions of Theorem \ref{THM: smooth nh manifold} \cite{o2023deRham}. \\
    
Under this reading, we may now relate the Euler characteristic of a non-Hausdorff manifold to those of its Hausdorff constituents. The following result confirms that the familiar ``inclusion-exclusion principle" (cf. pg. 221 of \cite{rotman2013introduction}) holds for our non-Hausdorff manifolds.
    \begin{theorem}[{\cite{o2023deRham}}]\label{THM: nH Euler characteristic}
        Let $\textbf{M}$ be a non-Hausdorff manifold, defined according to Theorem \ref{THM: topological nh manifold}. Then $$ \chi(\textbf{M}) = \chi(M_1) + \chi(M_2) - \chi(A).  $$ 
    \end{theorem}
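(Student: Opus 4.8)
The plan is to exploit the structure of $\textbf{M}$ as a union of two open sets and apply the Mayer--Vietoris sequence in singular homology, from which the inclusion-exclusion formula drops out by a standard rank-counting argument. The key observation is that, because the canonical maps $\phi_i : M_i \to \textbf{M}$ are open topological embeddings (as recorded after Theorem \ref{THM: topological nh manifold}), the images $U := \phi_1(M_1)$ and $V := \phi_2(M_2)$ form an open cover of $\textbf{M}$ with $U \cong M_1$ and $V \cong M_2$ as topological spaces, so that $H^S_\ast(U) \cong H^S_\ast(M_1)$ and $H^S_\ast(V) \cong H^S_\ast(M_2)$.

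First I would pin down the intersection $U \cap V$. A point of $\textbf{M}$ lies in both images precisely when its equivalence class contains a representative from each of $M_1$ and $M_2$; by the gluing relation $a \sim f(a)$ these are exactly the classes of the form $\{a, f(a)\}$ with $a \in A$. Hence $U \cap V = \phi_1(A) \cong A$. Since $U,V$ are open and $U \cup V = \textbf{M}$, the excision hypotheses for the singular Mayer--Vietoris sequence are satisfied, yielding the long exact sequence
$$ \cdots \to H^S_k(A) \to H^S_k(M_1) \oplus H^S_k(M_2) \to H^S_k(\textbf{M}) \to H^S_{k-1}(A) \to \cdots $$
under the identifications above.

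Second, I would invoke the purely algebraic fact that in any long exact sequence of finite-rank abelian groups the alternating sum of the ranks vanishes. Choosing a consistent sign assignment across the three-periodic pattern of the sequence (so that adjacent terms receive opposite signs), this collapses to
$$ \chi(\textbf{M}) - \big(\chi(M_1) + \chi(M_2)\big) + \chi(A) = 0, $$
which is exactly the claimed identity. The degree conventions in the paper's definition of $\chi$ are handled uniformly by this argument, since Mayer--Vietoris runs across all degrees at once.

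The main obstacle is not the homological bookkeeping but the finiteness needed to make the Euler characteristics and the alternating-sum argument legitimate: one must know that $M_1$, $M_2$, and the open set $A$ all have finitely-generated singular homology (equivalently, finite total Betti number), so that the ranks are well-defined and the exact-sequence rank identity applies. For the manifolds $M_i$ of interest this is automatic, but the openness of $A$ requires care, since an arbitrary open subset need not have finite-dimensional homology. In the setting of Theorem \ref{THM: smooth nh manifold} I would lean on the de Rham isomorphism of \cite{o2023deRham} together with the fact that $A$ is a well-behaved open submanifold of $M_1$ to secure the necessary finiteness before running the rank count.
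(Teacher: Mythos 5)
Your proposal is correct and follows essentially the same route as the paper: an open cover of $\textbf{M}$ by the embedded copies of $M_1$ and $M_2$ intersecting in $A$, the singular Mayer--Vietoris long exact sequence, and the vanishing alternating sum of ranks across a terminating exact sequence of vector spaces. Your additional attention to the finiteness of the Betti numbers of the open set $A$ is a sensible refinement of a point the paper's sketch leaves implicit, but it does not change the argument.
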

\begin{proof}
    We provide a sketch of the proof; for details, see \cite{o2023deRham}. Firstly, we may observe that the singular homology groups of $\textbf{M}$ may be related to the homology groups of the Hausdorff submanifolds by the Mayer-Vietoris sequence: 
    \begin{center}
\adjustbox{scale=0.85}{\begin{tikzcd}[column sep= 3.2em, row sep=2.5em, arrow style=math font,cells={nodes={text height=2ex,text depth=0.75ex}}]
      
       0 \arrow[r] & \cdots \arrow[r] & H^S_{q+1} (A) \arrow[r, "i_* \oplus f_*"] & H^S_{q+1}(M_1) \oplus  H^S_{q+1}(M_2) \arrow[r,"\phi_{1*} - \phi_{2*}"] \arrow[draw=none]{d}[name=Y, shape=coordinate]{} & H^S_{q+1}(\textbf{M}) \arrow[curarrow=Y]{dll}{} & \\
       & & H^S_{q} (A)\arrow[r, "i_* \oplus f_*"] & H^S_{q}(M_1) \oplus  H^S_{q}(M_2)\arrow[r, "\phi_{1*} - \phi_{2*}"]  & H^S_{q}(\textbf{M}) \arrow[r] & \cdots \arrow[r] & 0
\end{tikzcd}}
\end{center}
The precise description of these maps is not too important for our purposes, but can be found in \cite{o2023deRham} (they are essentially a reformulation of the pushforward maps found in \cite{hatcher2002algebraic}[Sec. 2.2], descended to homology). The important observation is that the above is a long exact sequence of vector spaces that terminates. Crucially, for any long exact sequence of vector spaces, the alternating sum of the dimensions of all the entries of the sequence must be zero. The result then follows by a rearrangement of this vanishing alternating sum. 
    \end{proof} 

In order to prove a non-Hausdorff Gauss-Bonnet theorem, we need to make a subtle yet crucial observation. In the integral formula of Theorem \ref{THM: Integration formula}, we must integrate over the topological closure of $A$, which in this context means that we are integrating a manifold with boundary. However, in the subadditive formula of $\chi(\textbf{M})$ listed above, we instead have an expression for $\chi(A)$, and in general it is not the case that $\chi(A) = \chi(\overline{A})$.\footnote{In the case that the $M_i$ are manifolds without boundaries, this tension may be broken by assuming that the open subset $A$ is \textit{regular-open}, meaning that the interior of $\overline{A}$ equals $A$ itself. This allows a homotopy equivalence between $A$ and $\overline{A}$, which induces an isomorphism of singular homologies and guarantees the equality $\chi(A) = \chi(\overline{A})$.} However, when we work with the non-Hausdorff trousers space in Section 4 we will not need to worry about this potential issue, so in what follows we will simply assume the equality $\chi(A)=\chi(\overline{A})$. \\

According to the remark we made at the end of Section 2.4, we may apply the subadditivity principle of Theorem \ref{THM: Integration formula} to a $2d$ curvature form to yield the equality \eqref{EQ: NH scalar curvature in general}. If we apply the Gauss-Bonnet theorem to each Hausdorff component of $\textbf{M}$, we will obtain the equality:  
\begin{align*}
\frac{1}{2}\int_{\textbf{M}} R \  dA &  = \frac{1}{2}\int_{M_1} R \ dA + \frac{1}{2}\int_{M_2} R \ dA - \frac{1}{2}\int_{\overline{A}} R  \ dA\\
     & = \left( 2\pi \chi(M_1) -\int_{\partial M_{1}} k d\gamma \right) + \left(2\pi \chi(M_2) -\int_{\partial M_{2}} k d\gamma\right) - \left(2\pi \chi(\overline{A}) - \int_{\partial \overline{A}} k dx \right) \\
     & = 2\pi \chi(\textbf{M}) - \left(  \int_{\partial M_{1}} k d\gamma + \int_{\partial M_{2}} k d\gamma - \int_{\partial \overline{A}} k d\gamma \right).
\end{align*}
Here we assume that the boundary of $\textbf{M}$ consists of the images of $\partial M_i$ under the canonical maps $\phi_i$. Note that it's possible for the two sets $\phi_i(\partial M_i)$ to intersect, however this may only occur if $M_1$ and $M_2$ have a common manifold boundary along the gluing region $A$. Thus, the closure $\overline{A}$ may have two different types of boundary: a \textit{manifold} boundary that already exists within $A$, and a \textit{topological} boundary that forms the Hausdorff-violating points once mapped into $\textbf{M}$. We thus write $\partial \overline{A} = \partial A \sqcup \textsf{Y}$, where here we use the special symbol $\textsf{Y}$ to denote the Hausdorff-violating piece of $\overline{A}$. By the usual subadditivity of Hausdorff integration, we see that
$$  \int_{\partial \textbf{M}} k  d\gamma = \int_{\partial M_1} k  d\gamma + \int_{\partial M_2} k  d\gamma - \int_{\partial A} k  d\gamma \quad \text{and} \quad \int_{\partial \overline{A}} k d\gamma = \int_{\partial A} k d\gamma + \int_{\textsf{Y}} k d\gamma. $$
By substituting these equalities into the previous equation, we obtain a non-Hausdorff Gauss-Bonnet in Euclidean signature:
\begin{equation}\label{EQ: Euclidean nh GBT}
    \frac{1}{2}\int_{\textbf{M}} R \  dA =  2\pi \chi(\textbf{M}) - \left(  \int_{\partial\textbf{M}} k d\gamma  - \int_{\textsf{Y}} k  d\gamma \right).
\end{equation}
We may the apply essentially the same reasoning as the above, this time using the Lorentzian Gauss-Bonnet theorem of Theorem \ref{THM: smooth LGBT}, to deduce the following. 

\begin{theorem}\label{THM: Lorentzian non-Hausdorff Gauss-Bonnet}
Let $(\textbf{M}, \textbf{g})$ be a non-Hausdorff two-dimensional spacetime built from Hausdorff spacetimes $M_1$ and $M_2$ according to Theorem \ref{THM: smooth nh manifold}. Suppose furthermore that the $M_i$ are manifolds with boundary, and $A$ satisfies $\chi(A)=\chi(\overline{A})$. Then $$ \mp 2\pi i\chi(\textbf{M}) = \frac{1}{2}\int_{\textbf{M}} R dA + \int_{\partial \textbf{M}} k d\gamma - \int_{\textsf{Y}} k d\gamma. $$
\end{theorem}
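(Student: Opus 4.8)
The plan is to mirror the Euclidean derivation \eqref{EQ: Euclidean nh GBT} exactly, but replace every application of the Euclidean Gauss-Bonnet theorem with the Lorentzian version stated in Theorem \ref{THM: smooth LGBT}. The whole argument rests on three prior results: the inclusion-exclusion principle for Euler characteristics (Theorem \ref{THM: nH Euler characteristic}), the non-Hausdorff integration formula (Theorem \ref{THM: Integration formula}), and the smooth Lorentzian Gauss-Bonnet theorem (Theorem \ref{THM: smooth LGBT}). Since all three are already in hand, the work is essentially bookkeeping: applying subadditivity of integration and then collating boundary terms.

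First I would invoke Theorem \ref{THM: Integration formula} applied to the $2d$ curvature form, exactly as in \eqref{EQ: NH scalar curvature in general}, to write
$$ \frac{1}{2}\int_{\textbf{M}} R \, dA = \frac{1}{2}\int_{M_1} R \, dA + \frac{1}{2}\int_{M_2} R \, dA - \frac{1}{2}\int_{\overline{A}} R \, dA. $$
Next I would apply Theorem \ref{THM: smooth LGBT} separately to each of the three Hausdorff pieces $M_1$, $M_2$, and $\overline{A}$, each of which is an almost-Lorentzian manifold with boundary. This produces, for each piece, the imaginary Euler-characteristic term $\mp 2\pi i \chi(\cdot)$ minus the geodesic curvature of its boundary and its corner terms. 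Summing the three Euler-characteristic contributions with the correct signs and invoking Theorem \ref{THM: nH Euler characteristic} (together with the standing assumption $\chi(A) = \chi(\overline{A})$) collapses them into the single term $\mp 2\pi i \chi(\textbf{M})$.

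The remaining step is the reorganisation of the boundary integrals, which is where the genuine content lives. As in the Euclidean computation, I would decompose $\partial\overline{A} = \partial A \sqcup \textsf{Y}$, separating the pre-existing manifold boundary of $A$ from the Hausdorff-violating piece $\textsf{Y}$ that becomes the doubled surface in $\textbf{M}$. Using the ordinary Hausdorff subadditivity for the boundary curvatures, $\int_{\partial\textbf{M}} k\, d\gamma = \int_{\partial M_1} k\, d\gamma + \int_{\partial M_2} k\, d\gamma - \int_{\partial A} k\, d\gamma$ and $\int_{\partial\overline{A}} k\, d\gamma = \int_{\partial A} k\, d\gamma + \int_{\textsf{Y}} k\, d\gamma$, the $\partial A$ contributions cancel and one is left with precisely $\int_{\partial\textbf{M}} k\, d\gamma - \int_{\textsf{Y}} k\, d\gamma$, yielding the stated identity after rearrangement.

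\emph{The hard part} is not the algebra but ensuring the corner terms $\sum\theta_{ext}$ from Theorem \ref{THM: smooth LGBT} are handled consistently across the three pieces. In the Euclidean sketch the corner terms were silently absorbed into the geodesic-curvature integrals; here I must verify that the corners arising on $\partial M_1$, $\partial M_2$, and $\partial\overline{A}$ either cancel in the same inclusion-exclusion pattern as the smooth boundary integrals, or can be subsumed into the curve integrals by the distributional convention for $k\, d\gamma$. The delicate point is that the boundary $\textsf{Y}$ is itself a glued interface, so its endpoints — where $\textsf{Y}$ meets $\partial A$ — may contribute corner angles that must be shown to cancel between the $\overline{A}$ term and the $M_i$ terms; this requires the sign of the $i\epsilon$-regulariser to be chosen uniformly across all pieces, exactly the ``global Wick rotation'' consistency already flagged in the footnote to Theorem \ref{THM: Combinatorial LGBT}.
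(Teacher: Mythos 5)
Your proposal is correct and follows the paper's own route exactly: the paper derives the Euclidean identity \eqref{EQ: Euclidean nh GBT} via Theorem \ref{THM: Integration formula}, Theorem \ref{THM: nH Euler characteristic} (with the standing assumption $\chi(A)=\chi(\overline{A})$), and the decomposition $\partial\overline{A} = \partial A \sqcup \textsf{Y}$ with the same inclusion-exclusion cancellation of the $\partial A$ terms, and then obtains Theorem \ref{THM: Lorentzian non-Hausdorff Gauss-Bonnet} by repeating this bookkeeping verbatim with the Lorentzian Gauss-Bonnet theorem (Theorem \ref{THM: smooth LGBT}) substituted for the Euclidean one. Your closing caveat about corner terms and a uniformly chosen sign of the $i\epsilon$-regulariser is precisely the implicit ``global Wick rotation'' consistency under which the paper's Lorentzian results hold, so it matches rather than departs from the paper's argument.
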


\section{The Non-Hausdorff Trousers Space}
We will now define and evaluate the gravitational action for a non-Hausdorff version of the Trousers space. As outlined in the introduction, we will consider a compactified version of Penrose's spacetime of Figure \ref{FIG: Penrose spacetime}, taken to be long enough so that the initial surface $\Sigma_1$ is homeomorphic to $S^1$ and the final surface $\Sigma_2$ is homeomorphic to $S^1 \sqcup S^1$. According to our discussion in Section 2, we can construct such a space by gluing two copies of the cylinder together everywhere outside the causal future of a point. Formally, we take
\begin{itemize}
    \item $M_1 = M_2$ to be cylinders $S^1\times D^1$ endowed with the same flat Lorentzian metric,
    \item a preferred point $p$ in $M_1$ whose causal future $J^+(p)$ contains the final boundary $S^1\times \{1\}$ of $M_1$ as a subset,  
    \item $A = M_1 \backslash J^+(p)$ endowed with the open submanifold atlas and metric induced from $M_1$, and
    \item $f: A\rightarrow M_2$ to be the identity map. 
\end{itemize}
We denote by $\textbf{T}$ the topological space formed according to the above. Since the gluing map $f$ is taken to be the identity map, the data above falls within the scope of Theorems \ref{THM: topological nh manifold}, \ref{THM: smooth nh manifold} and \ref{THM: sections are fibre product}, and thus we may view $\textbf{T}$ as a smooth non-Hausdorff manifold that contains the two cylinders $M_i$ as maximal Hausdorff open submanifolds. Moreover, the map $f$ is also a time-orientation preserving isometric embedding, so we may effectively transfer the Lorentzian metrics of the cylinders $M_i$ into the non-Hausdorff manifold $\textbf{T}$ according to our discussion in Section 2.4. We denote the resulting non-Hausdorff spacetime by $(\textbf{T}, \textbf{g)}$, and we will hereafter refer to this as the \textit{non-Hausdorff trousers space}. \\

By construction, the $M_1$-relative closure $\overline{A}$ of the gluing region $A$ will include the lightlike future of the point $p$, as depicted in Figure \ref{FIG: A-bar}. Once the quotient is performed to construct $(\textbf{T}, \textbf{g)}$, the Hausdorff-violating submanifold of $\textbf{T}$ will be the two lightlike futures of the distinct points $\phi_1(p)$ and $\phi_2(p)$. The height function $\textbf{f}:\textbf{T} \rightarrow \mathbb{R}$ that maps each point in $\textbf{T}$ to its $D^1$-coordinate in either $M_i$ will be a well-defined smooth function that allows us to interpret $(\textbf{T}, \textbf{g})$ as a non-Hausdorff transition from $S^1$ to $S^1 \sqcup S^1$. Indeed -- on the spacelike slice of $\textbf{T}$ that contains the points $\phi_i(p)$, the topology will begin splitting by changing from $\Sigma_1=S^1$ to a circle with two Hausdorff-violating points. These points will then propagate along null rays, and finish splitting into two circles at some later time. At every point in time after this, the spacelike slices of $(\textbf{T}, \textbf{g})$ will have the topology of $S^1 \sqcup S^1$. 

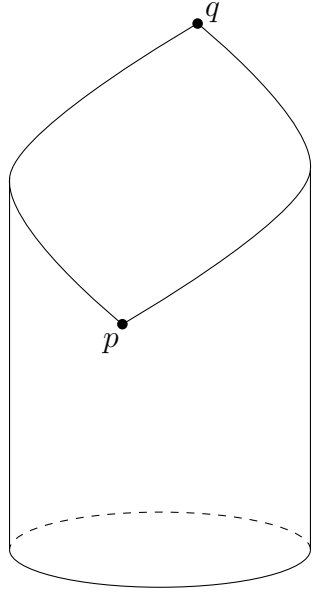
\begin{figure}
    \centering
    
\begin{tikzpicture}
    \draw (2,0) arc
	[start angle=360,	end angle=180, x radius=20mm, y radius =5mm] ;
\draw[dashed] (2,0) arc
	[start angle=0,	end angle=180, x radius=20mm, y radius =5mm] ;

 \draw[] (-2,0)--(-2,4.95);
 \draw[] (2,0)--(2,5.05);
\fill[] (-0.5,3) circle (0.7mm);
\fill[] (0.5,7) circle (0.7mm);

\draw[] plot[smooth, tension=0.6] coordinates{(-0.5,3) (1.99,5) (0.5,7)};
\draw[] plot[smooth, tension=0.6] coordinates{(-0.5,3) (-1.99,5) (0.5,7)};

\node[] at (-0.65,2.75) {$p$};
\node[] at (0.7,7.15) {$q$};
 
\end{tikzpicture}\caption{The region $\overline{A}$ for the non-Hausdorff trousers space $\textbf{T}$. When constructing $\textbf{T}$ we glue two cylinders together everywhere in $A$, yet leave the future component of $\partial A$ unidentified. In the quotient space $\textbf{T}$, this yields two Hausdorff-inseparable copies of the lightlike future of the point $p$.}
    \label{FIG: A-bar}
\end{figure}

\subsection{Causal Properties}
As mentioned in the introduction, one may justify the inclusion of the Hausdorff trousers space into a path integral scheme such as Equation \ref{EQ: schematic Lorentzian path integral} on causal grounds. Despite being non-globally hyperbolic, the Hausdorff trousers space, aside from the causal irregularity at the crotch, is a well-behaved spacetime. In particular, it exhibits no closed timelike curves, admits a global time function, and determines a causal structure whose relation $\leq$ is a partial order. \\

We will now show that the non-Hausdorff trousers space $(\textbf{T}, \textbf{g})$ enjoys similar, and even superior, causal features. To begin, we recall the relation $\textbf{x} \leq \textbf{y}$ if and only if $\textbf{y}$ lies in the future lightcone of $\textbf{x}$, which in this context means that there is a future-directed causal curve connecting $\textbf{x}$ to $\textbf{y}$. The following result characterises the causal relation $\leq$ of $\textbf{T}$ in terms of the Hausdorff spacetimes $M_i$.

\begin{lemma}\label{LEM: order theory of nh trousers space}
    For any two points $\textbf{x}$ and $\textbf{y}$ in $\textbf{T}$, we have that $\textbf{x} \leq \textbf{y}$ if and only if  $\phi_1^{-1}(\textbf{x}) \leq \phi_1^{-1}(\textbf{y})$ or $ \phi_2^{-1}(\textbf{x})\leq \phi_2^{-1}(\textbf{y})$, or both.  
\end{lemma}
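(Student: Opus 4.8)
The plan is to prove the two implications separately, using throughout that the canonical maps $\phi_i:M_i\to\textbf{T}$ are (by Section 2.4) isometric, time-orientation-preserving open embeddings. The reverse implication is immediate: if $\phi_1^{-1}(\textbf{x})\le\phi_1^{-1}(\textbf{y})$---which already presupposes $\textbf{x},\textbf{y}\in\phi_1(M_1)$---then any future-directed causal curve joining the preimages in $M_1$ is pushed forward by $\phi_1$ to a future-directed causal curve from $\textbf{x}$ to $\textbf{y}$ in $\textbf{T}$, and symmetrically for $\phi_2$. All the content lies in the forward implication, which I would reduce to the single geometric claim that the image of any future-directed causal curve $\boldsymbol{\gamma}$ from $\textbf{x}$ to $\textbf{y}$ is contained in one sheet, i.e.\ $\operatorname{Im}\boldsymbol{\gamma}\subseteq\phi_1(M_1)$ or $\operatorname{Im}\boldsymbol{\gamma}\subseteq\phi_2(M_2)$. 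Granting this, $\phi_i^{-1}\circ\boldsymbol{\gamma}$ is a future-directed causal curve in the relevant $M_i$, again because $\phi_i$ is an isometric time-orientation-preserving embedding, which yields the asserted relation between the preimages.

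To set up the reduction I would first record the partition $\textbf{T}=\textbf{O}\sqcup\textbf{C}_1\sqcup\textbf{C}_2\sqcup\textbf{Y}_1\sqcup\textbf{Y}_2$, where $\textbf{O}=\phi_1(A)=\phi_2(A)$ is the shared overlap, $\textbf{C}_i=\phi_i(\operatorname{int}J^+(p))$ are the cone interiors exclusive to each sheet, and $\textbf{Y}_i=\phi_i(\dot{J}^+(p))$ are the two Hausdorff-violating copies of the null-cone boundary. The two facts I would isolate are: (i) inside each sheet the set $\phi_i(J^+(p))$ is a future set, so a future-directed causal curve whose lift to $M_i$ once lies in $J^+(p)$ stays in $J^+(p)$ thereafter and can never return to the overlap $\textbf{O}=\phi_i(A)$ (as $A\cap J^+(p)=\varnothing$); and (ii) the frontier identity $\overline{\phi_1(M_1)}\setminus\phi_1(M_1)=\textbf{Y}_2$ (and symmetrically with the indices swapped). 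Fact (ii) follows because every neighbourhood of a point $\phi_2(z)$ with $z\in\dot{J}^+(p)$ meets $\phi_1(M_1)$ through the overlap, while neighbourhoods of points of $\textbf{C}_2$ do not meet $\phi_1(M_1)$ at all; thus the only limit points of sheet $1$ lying outside sheet $1$ are the inseparable boundary points $\textbf{Y}_2$.

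With these in hand I would show that $\boldsymbol{\gamma}$ cannot meet the exclusive part $\textbf{C}_1\cup\textbf{Y}_1$ of sheet $1$ and also the exclusive part $\textbf{C}_2\cup\textbf{Y}_2$ of sheet $2$. Suppose it meets $\textbf{C}_1\cup\textbf{Y}_1$ at some parameter; reading the curve in $M_1$ on the open set $\phi_1(M_1)$, fact (i) confines it from that parameter onward to $\phi_1(J^+(p))=\textbf{C}_1\cup\textbf{Y}_1$. Since $\phi_1(M_1)$ is open, if $\boldsymbol{\gamma}$ later left it the exit point would lie in $\overline{\phi_1(M_1)}\setminus\phi_1(M_1)=\textbf{Y}_2$ by (ii) and would be a limit of points of $\textbf{C}_1\cup\textbf{Y}_1$; but $\textbf{Y}_2$ is disjoint from the closure of $\textbf{C}_1\cup\textbf{Y}_1$, a contradiction. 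Hence after such a parameter the curve never reaches $\textbf{C}_2\cup\textbf{Y}_2$, and running the same argument with the indices reversed rules out the opposite order of visits; the degenerate possibility that the two exclusive visits occur at one and the same parameter is excluded because a curve is single-valued while $\textbf{Y}_1$ and $\textbf{Y}_2$ are disjoint. Therefore all exclusive visits of $\boldsymbol{\gamma}$ belong to a single sheet, so its image lies in $\phi_1(M_1)$ or in $\phi_2(M_2)$, as required.

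I expect the main obstacle to be precisely this non-Hausdorff sheet-switching step. In a Hausdorff spacetime one would simply invoke the topological boundary of a sheet, but here the two sheets share only the inseparable null-cone copies $\textbf{Y}_i$, across which a continuous---and a priori even a causal---curve is free to branch, so ruling out a jump is not a formal matter and must be driven by the causal future-set structure of fact (i) together with the frontier computation of fact (ii). Along the way I would also check the small point, used implicitly, that the $D^1$-height function restricts on each sheet to a genuine time function, so that the notion ``future-directed'' is consistent across chart overlaps; this is routine since the $\phi_i$ are time-orientation-preserving.
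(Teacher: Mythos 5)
Your proposal is correct, but there is nothing in the paper to compare it against: the paper states Lemma~\ref{LEM: order theory of nh trousers space} without any proof and proceeds directly to its order-theoretic consequences, so your argument supplies a demonstration that the authors omitted. On its merits: the reverse implication is indeed immediate from the fact that the $\phi_i$ are time-orientation-preserving isometric open embeddings, and you have correctly isolated the entire content in the no-sheet-switching claim. Your two supporting facts both check out. For (i), $A=M_1\setminus J^+(p)$ is disjoint from $J^+(p)$ by construction, and $J^+(J^+(p))\subseteq J^+(p)$, so a causal curve read in $M_i$ that enters $J^+(p)$ never returns to the overlap $\phi_i(A)$. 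For (ii), basic neighbourhoods of a point of $\textbf{C}_2=\phi_2(\operatorname{int}J^+(p))$ lie inside $\textbf{C}_2$ (as $\operatorname{int}J^+(p)$ is open in $M_2$ and $\phi_2$ is an open map), while basic neighbourhoods $\phi_2(N)$ of a point of $\textsf{Y}_2$ meet sheet $1$ only through $\phi_2(N\cap A)\subseteq\textbf{O}$, which simultaneously gives your frontier identity $\overline{\phi_1(M_1)}\setminus\phi_1(M_1)=\textsf{Y}_2$ and the disjointness $\textsf{Y}_2\cap\overline{\textbf{C}_1\cup\textsf{Y}_1}=\emptyset$ needed for the exit-point contradiction. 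The first-exit-time step is sound because $\phi_1(M_1)$ is open, so $\boldsymbol{\gamma}^{-1}(\phi_1(M_1))$ is open in the parameter interval and the infimum of exit parameters is a genuine exit point that is a left-limit of points of $\phi_1(J^+(p))$. You also correctly anticipate the one genuinely non-Hausdorff danger, namely that left-limits along a curve in the overlap are non-unique near $\textsf{Y}$, so that a continuous curve can branch into either copy of the null cone; your argument defuses this precisely because branching is only possible from the overlap, which fact (i) makes inaccessible once the curve is inside either exclusive cone. The only cosmetic caveat is the one you already flag: the disjuncts $\phi_i^{-1}(\textbf{x})\leq\phi_i^{-1}(\textbf{y})$ presuppose that both points lie in the corresponding sheet, which your one-sheet containment delivers.
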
 

In this sense, we can see that the non-Hausdorff Trousers space $\textbf{T}$ naturally inherits the causal structures of $M_1$ and $M_2$. In particular, we see that the causal structure $(\textbf{T}, \leq)$ naturally inherits a poset structure from the causality of the cylinders $M_i$. \footnote{In fact, it appears that a more precise categorisation of the features of the causal space $(\textbf{T}, \leq)$ is that it manifests as a model of the logico-mathematical theory $\textsf{BST}_{NF}$ of \cite{belnap2021branching, OConnellthesis, wronski2009minkowskian}. This is an order-theoretic axiomatisation of a certain type of indeterminism, and runs somewhat parallel to the considerations of the causal set theorists.}

\begin{theorem}
    The causal relation $\leq$ on $\textbf{T}$ is a transitive, reflexive and antisymmetric relation.
\end{theorem}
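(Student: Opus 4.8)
The plan is to prove the three order-theoretic properties of $\leq$ on $\textbf{T}$ by reducing each one to the corresponding property on the Hausdorff cylinders $M_i$, using Lemma \ref{LEM: order theory of nh trousers space} as the central translation device. Since each $M_i$ is a flat Lorentzian cylinder with a well-behaved causal structure, the relation $\leq$ on $M_i$ is already known to be reflexive, transitive, and antisymmetric (it is a genuine partial order on a causally well-behaved spacetime). The strategy is therefore to transport these facts across the canonical embeddings $\phi_i$, taking care that the disjunctive ``or both'' characterisation does not spoil antisymmetry.

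\emph{Reflexivity} is immediate: for any $\textbf{x} \in \textbf{T}$, at least one of $\phi_1^{-1}(\textbf{x})$ or $\phi_2^{-1}(\textbf{x})$ is defined (since the $\phi_i(M_i)$ cover $\textbf{T}$), and reflexivity in the relevant $M_i$ gives $\phi_i^{-1}(\textbf{x}) \leq \phi_i^{-1}(\textbf{x})$, whence $\textbf{x} \leq \textbf{x}$ by Lemma \ref{LEM: order theory of nh trousers space}. \emph{Transitivity} requires slightly more bookkeeping: given $\textbf{x} \leq \textbf{y}$ and $\textbf{y} \leq \textbf{z}$, each relation holds in at least one copy, so I would argue that both must hold in a \emph{common} copy. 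The key observation is that the causal curves realising these relations are future-directed, and any such curve passing through the region where both charts are defined (the glued region $A$, where $\phi_1$ and $\phi_2$ agree) stays coherently within one cylinder's causal structure; since the futures only branch and never re-merge, if $\textbf{x},\textbf{y}$ are $M_1$-related and $\textbf{y},\textbf{z}$ are $M_2$-related, the point $\textbf{y}$ must lie in the overlap, forcing both pairs into a single copy, after which transitivity in that $M_i$ closes the argument.

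\emph{Antisymmetry} is the step I expect to be the main obstacle, precisely because of the disjunction in Lemma \ref{LEM: order theory of nh trousers space}. Suppose $\textbf{x} \leq \textbf{y}$ and $\textbf{y} \leq \textbf{x}$. The naive worry is a ``mixed'' cycle: $\textbf{x} \leq \textbf{y}$ witnessed in $M_1$ and $\textbf{y} \leq \textbf{x}$ witnessed in $M_2$, with neither single copy containing both relations, so that antisymmetry within each $M_i$ cannot be invoked directly. To rule this out, I would show that any causal curve witnessing $\textbf{x} \leq \textbf{y}$ together with one witnessing $\textbf{y} \leq \textbf{x}$ forms a closed causal loop in $\textbf{T}$, and then argue that the height function $\textbf{f}:\textbf{T}\to\mathbb{R}$ (defined in Section 4 as the $D^1$-coordinate) is strictly monotonic along any non-constant future-directed causal curve, since it pulls back the time function of each cylinder and the two copies agree on their shared values. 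A strictly increasing function along a closed loop forces the loop to be constant, giving $\textbf{x}=\textbf{y}$. This global time function thus simultaneously eliminates the mixed-cycle pathology and confirms the absence of closed timelike curves, so I would foreground its monotonicity as the decisive tool.

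Finally, I would note that the hypotheses needed here — that $f$ is the identity map and a time-orientation preserving isometric embedding, so the two cylinders' time functions descend to a single well-defined $\textbf{f}$ on $\textbf{T}$ — are exactly those established in the construction of $\textbf{T}$, so no additional causal assumptions beyond Lemma \ref{LEM: order theory of nh trousers space} and the existence of $\textbf{f}$ are required.
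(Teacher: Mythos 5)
Your proposal is correct, and it is in fact substantially more detailed than what the paper provides: the paper offers no explicit proof of this theorem, simply asserting after Lemma \ref{LEM: order theory of nh trousers space} that $(\textbf{T},\leq)$ ``naturally inherits a poset structure'' from the cylinders. You have correctly spotted that this inheritance is not automatic --- the union of two partial orders generally fails to be transitive or antisymmetric, precisely because of the mixed-witness cases you isolate --- and you supply the two ingredients that make it work here. For transitivity, the decisive geometric fact, which your phrase ``futures only branch and never re-merge'' gestures at, is that the gluing region $A = M_1 \backslash J^+(p)$ is the complement of a future set and hence a \emph{past} set: a future-directed causal curve whose endpoint lies in $\phi(A)$ lies entirely in $\phi(A)$, so the relation $\textbf{x}\leq\textbf{y}$ witnessed in $M_1$ transfers through $f=\mathrm{id}$ to $M_2$, where transitivity closes the argument. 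For antisymmetry, your height-function argument is valid: the paper itself asserts that $\textbf{f}$ is a well-defined smooth global time function, every causal curve in $\textbf{T}$ locally projects via the $\phi_i$ to a causal curve in a flat cylinder along which the $D^1$-coordinate strictly increases unless the curve is constant, and this kills any mixed cycle. Your route also yields the non-existence of closed timelike curves as a byproduct, which the paper states separately without proof, so the monotonicity of $\textbf{f}$ is arguably the cleaner unifying tool.

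One point worth tightening: in the transitivity step, saying that $\textbf{y}$ ``must lie in the overlap, forcing both pairs into a single copy'' is not quite the right causal order of reasoning --- membership of $\textbf{y}$ in the overlap alone does not force anything, and the transfer works only in one direction. It is specifically the pair whose \emph{future} endpoint lies in the overlap (here $\textbf{x}\leq\textbf{y}$) that migrates to the other copy, via the past-set property of $A$; the pair $\textbf{y}\leq\textbf{z}$ need not transfer, since $\textbf{z}$ may sit in the exclusive region $\phi_2(J^+(p))$. Note also that antisymmetry in the mixed case could be handled by the same transfer argument (both relations then land in one $M_i$, whose relation is a genuine partial order on the globally hyperbolic cylinder), making the height-function argument an alternative rather than a necessity --- but since it simultaneously establishes stable-causality-type behaviour, foregrounding it as you do is a reasonable choice.
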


As a brief aside, we remark that the above appears to hold more generally -- if we consider some category of non-necessarily-Hausdorff spacetimes with isometric embeddings as morphisms, then the mapping of spacetimes to causal sets may be considered as a covariant functor into the category of partially-ordered sets. In this case, it is likely that the above argument generalises directly to conclude that this ``causal functor" takes colimits to colimits, so that we can describe any causal structure on a non-Hausdorff spacetime as a colimit of posets. \\

Aside from the poset structure of $(\textbf{T}, \leq)$, the non-Hausdorff trousers space satisfies other nice causal properties, which do \emph{not} hold in the Hausdorff trousers case. First, recall from pointset topology that the compactness of a set is preserved under continuous maps. Applied to our context, we may use the canonical maps $\phi_i: M_i \rightarrow \textbf{T}$ to send causal diamonds $J^+(x) \cap J^-(y)$ from either $M_i$ into $\textbf{T}$. Since the Lorentzian cylinder is globally hyperbolic, Lemma \ref{LEM: order theory of nh trousers space} allows us to conclude that the causal diamonds $J^+(\textbf{x}) \cap J^-(\textbf{y})$ are compact in $\textbf{T}$. Moreover, one can readily find spacelike slices of $\textbf{T}$ whose domain of dependence equals the whole space. In these two senses, the non-Hausdorff trousers space is causally better-behaved than the Hausdorff trousers. In fact, these properties are strongly reminiscent of global hyperbolicity -- a connection that should be more fully explored by studying field equations on the non-Hausdorff spacetime.

\subsection{The Gravitational Action}
We will continue our discussion in a similar spirit to Section 3.3, that is, we will first describe a non-Hausdorff action in Euclidean signature, and then we will move on to the subtleties of Lorentzian signature. In each case we will begin with a general treatment of non-Hausdorff manifolds, before evaluating the derived actions for the case of the non-Hausdorff manifold $\textbf{T}$, equipped with the metric of appropriate signature. As such, throughout this section we will we assume that $\textbf{M}$ is a non-Hausdorff two-dimensional manifold satisfying the topological assumptions of Theorem \ref{THM: Lorentzian non-Hausdorff Gauss-Bonnet} regarding the Euler characteristics of $A$ and $\overline{A}$. 

\subsubsection{Euclidean Gravity}
For a two dimensional Hausdorff manifold $M$ with possible boundary $\partial M$, vacuum Euclidean gravity may be fully described via the Gauss-Bonnet action:
$$  \mathcal{S}(M,h) = \frac{1}{2\kappa}\int_M R \ dA + \frac{1}{\kappa}\int_{\partial M} k d\gamma, $$
where here any potential corner contributions have been absorbed into the Gibbons-Hawking-York boundary term. According to the Gauss-Bonnet theorem for Riemannian surfaces, we know that $\mathcal{S}(M,h) = \frac{2\pi}{\kappa} \chi(M)$, and it is in this sense that two-dimensional gravity is a purely topological theory. \\ 

If we were to proceed in a similar manner for a non-Hausdorff action, then it would seem natural to maintain the topological nature of the theory and define the action according to a Euclidean version of the non-Hausdorff Gauss-Bonnet theorem. For a non-Hausdorff manifold $\textbf{M}$ constructed via Theorem \ref{THM: smooth nh manifold} and endowed with a Riemannian metric $\textbf{h}$, we write:
\begin{equation}\label{EQ: nh Euclidean action}
    \mathcal{S}(\textbf{M}, \textbf{h}) := \frac{1}{2\kappa}\int_{\textbf{M}} R dA + \frac{1}{\kappa}\int_{\partial \textbf{M}} k d\gamma - \frac{1}{\kappa}\int_{\textsf{Y}} k d\gamma.  
\end{equation}  
By construction, the set $\textsf{Y} = \overline{A}\backslash A$ is precisely one-half of the Hausdorff-violating submanifold sitting inside $\textbf{M}$. This can be seen as a sort-of ``interior surface term" or perhaps an ``internal boundary component", the dynamics of which are captured by an extra Gibbons-Hawking-York term of the appropriate sign. \\

Aside from a recreating a topological theory, the action $\mathcal{S}$ may also be justified according to a variational principle. Indeed, if we were to vary $\mathcal{S}$ with respect to the Riemannian metric $\textbf{h}$, then we are left with the following: 
\begin{align*}
    \delta_\textbf{h} \mathcal{S} & = \delta_\textbf{h}\left( \int_{\textbf{M}} R dA + \int_{\partial \textbf{M}} k d\gamma - \int_{\textsf{Y}} k d\gamma \right) \\
    & = \delta_\textbf{h}\Big{(} \int_{M_1} R dA + \int_{M_2} R dA - \int_{\overline{A}} R dA   + \int_{\partial M_1} k d\gamma + \int_{\partial M_2} k d\gamma - \int_{\partial A} k d\gamma - \int_{\textsf{Y}} k d\gamma\Big{)} \\
    & = \delta_{h_1} \left( \int_{M_1} R dA + \int_{\partial M_1} k d\gamma  \right) + \delta_{h_2} \left( \int_{M_2} R dA + \int_{\partial M_2} k d\gamma  \right) - \delta_{h_{\overline{A}}} \left( \int_{\overline{A}} R dA + \int_{\partial \overline{A}} k d\gamma  \right)\\
    & = 0 + 0 - 0 
\end{align*}
where here we have used the same delineations of the boundary as in Section 3.3 and have suppressed the overall factors of $\kappa^{-1}$ for readability. Observe that in the above expression we needed to include the additional geodesic curvature of the submanifold $\textsf{Y}$, since otherwise we are left with incomplete boundary data for the subspace $\overline{A}$.\\

Now, consider the non-Hausdorff trousers space $\textbf{T}$ as defined previously, but equipped with some Riemannian (rather than Lorentzian) metric $\textbf{h}$, defined according to our general discussion in Section 2.4. Topologically, $\textbf{T}$ consists of a pair of cylinders $M_i$ glued along a semi-open cylinder $A \cong S^1\times [0,1)$. These spaces all retract onto the circle, so their Euler characteristics vanish. According to Theorem \ref{THM: nH Euler characteristic}, the Euclidean action \eqref{EQ: nh Euclidean action} for $(\textbf{T}, \textbf{h})$ may then be evaluated as 
\begin{align*}
    S(\textbf{T},\textbf{h}) & = \frac{1}{2\kappa}\int_{\textbf{M}} R dA + \frac{1}{\kappa}\int_{\partial \textbf{M}} k d\gamma  - \frac{1}{\kappa}\int_{\textsf{Y}} k d\gamma \\ 
    & = \frac{2\pi}{\kappa} \chi(\textbf{T}) \stackrel{(\ref{THM: nH Euler characteristic})}{=} \frac{2\pi}{\kappa} (\chi(M_1) + \chi(M_2) - \chi(\overline{A})) = \frac{2\pi}{\kappa}(0 + 0 - 0) = 0.
\end{align*} 
In short, this means that the non-Hausdorff trousers space is flat, which agrees with the fact that $(\textbf{T}, \textbf{h})$ is locally isometric to the flat cylinder by construction. We may conclude from this that in the Euclidean theory, the non-Hausdorff trousers space will have a larger contribution to the path integral than the Euclidean version of the Hausdorff trousers space. \\

In fact, if we consider a broader path integral that sums over all topologies interpolating between all types of boundary components, then we are left with the observation that the non-Hausdorff trousers space contributes to the Euclidean path integral as strongly as the cylinder. Moreover, it appears as though one can make arbitrarily-complicated configurations of non-Hausdorff cylinders that remain globally flat, e.g. by adding extra legs to the trousers, which will similarly contribute with equal strength to the path integral. It seems that the only way to ensure the suppression of non-Hausdorff trousers in Euclidean gravity (relative to the cylinder) would be to artificially introduce an extra term that tracks Hausdorff-violation:
$$  \exp\{ - \mathcal{S}(\textbf{M},\textbf{g}) \}\longrightarrow  \exp\{ - \mathcal{S}(\textbf{M}, \textbf{g}) - \alpha(\textbf{M}) \},      $$
where here $\alpha$ ought to vanish for Hausdorff manifolds and be strictly positive otherwise. 

\subsubsection{Lorentzian Gravity}

In Lorentzian gravity, we would like to define the action as in \eqref{EQ: nh Euclidean action}, where here we compute the curvature and volume form according to the Lorentzian metric instead. We may again motivate this choice by the topological nature of the action, and again the variation of this action will vanish provided we include the additional Gibbons-Hawking-York term for the Hausdorff-violating submanifold. However, in contrast to the Euclidean setting, we now have the additional subtlety that the corner terms of the Hausdorff-violating submanifold need to be computed using Lorentzian turning angles. \\

We will illustrate this for the non-Hausdorff trousers space $(\textbf{T}, \textbf{h})$. Here there are two special points in $\overline{A}$ that are used in the construction of $\textbf{T}$. These are $p$, the initial pointlike source of topology change, and $q$, which is the final point of topology change. Observe that there are two future-directed null rays connecting $p$ to $q$, as pictured in Figure \ref{FIG: A-bar}, which makes the subset $\textsf{Y}$ of $\overline{A}$ a boundary consisting of piecewise-geodesic null rays. Expanding out the Lorentzian version of equation \eqref{EQ: nh Euclidean action}, we have: 
\begin{align*}
    \mathcal{S}(\textbf{T},\textbf{g}) & = \frac{1}{2\kappa}\int_{\textbf{M}} R dA + \frac{1}{\kappa}\int_{\partial \textbf{M}} k d\gamma - \frac{1}{\kappa}\int_{\textsf{Y}} k d\gamma \\
    & = \frac{1}{2\kappa}\left( \int_{M_1} RdA + \int_{M_2} RdA - \int_{\overline{A}} R dA \right)  + \frac{1}{\kappa}\left( \int_{S^1} k d\gamma + \int_{S^1} k d\gamma + \int_{S^1} k d\gamma \right)  \\
    & \ \ \ - \frac{1}{\kappa}\left( \int_{\gamma_1} k d\gamma + \int_{\gamma_2} k d\gamma + \sum \theta_{ext} \right),
\end{align*}
where here the $\gamma_i$ denote the two null rays connecting $p$ to $q$ in $\overline{A}\subset M_1$. The total scalar curvature terms above will all vanish, since each of the spaces involved is a flat cylinder. Moreover, all of the geodesic curvature terms will also vanish, since the copies of $S^1$ on the boundary of $\textbf{T}$ are all spacelike geodesic, and both the $\gamma_i$ are null geodesics from $p$ to $q$ in $\overline{A}$. Thus the total action reduces to the term $\sum \theta_{ext}$, which here is a sum of the turning angles of $\textsf{Y} = \overline{A}\backslash A$ at $p$ and $q$. \\

Since both curves in $\textsf{Y}$ are null rays, we are left with a computation of turning angles using the null angular formulae of \eqref{EQ: m null, u spacelike}--\eqref{EQ: m,n null, timelike quadrant}. Assuming that we may select the sign of the $i\epsilon$-regulariser for the tangent spaces $T_p \overline{A}$ and $T_q\overline{A}$ of the corners $p$ and $q$ independently, we are left with four possible combinations. These are:
\begin{enumerate}
    \item[(i)] $T_p \overline{A}$ and $T_q\overline{A}$ both use regulariser $+i\epsilon$,
    \item[(ii)] $T_p \overline{A}$ and $T_q\overline{A}$ both use regulariser $-i\epsilon$,
    \item[(iii)] $T_p \overline{A}$ uses regulariser $+i\epsilon$ and $T_q \overline{A}$ uses regulariser $-i\epsilon$,
    \item[(iv)] $T_p \overline{A}$ uses regulariser $-i\epsilon$ and $T_q \overline{A}$ uses regulariser $+i\epsilon$.
\end{enumerate}

For the first two cases, the two corner contributions to $\mathcal{S}(\textbf{T}, \textbf{g})$ have equal strength but opposite sign, so they cancel and the total action reduces to zero. Put differently, cases (i) and (ii) will correspond to global Wick rotations for a triangulation of $\overline{A}$, and thus fall within the scope of the Gauss-Bonnet theorems \ref{THM: Combinatorial LGBT} and \ref{THM: smooth LGBT}. The Euler characteristic of $\textbf{T}$ being zero then ensures that $\mathcal{S}(\textbf{T}, \textbf{g})$ vanishes, and we again encounter the same issue as in the Euclidean case. \\

Alternatively, we may also consider cases (iii) and (iv): if we were to pick regularisers with opposite signs, the sum of corner terms would not cancel. Thus we may \textit{not} apply the Gauss-Bonnet theorem in order to evaluate the action. Instead, we are left with actions taking the values 
\begin{equation}\label{EQ: NH actions for trousers space}
    \mathcal{S}^{(iii)}(\textbf{T}, \textbf{g}) = + \frac{\pi i}{\kappa}  \ \ \text{and} \ \  \mathcal{S}^{(iv)}(\textbf{T}, \textbf{g}) = - \frac{\pi i}{\kappa}.  
\end{equation}
The former will lead to a suppression of the non-Hausdorff trousers space in the schematic path integral \eqref{EQ: schematic Lorentzian path integral}, and the latter will lead to an enhancement. It seems reasonable, therefore, to suggest that we select regularisers of opposite sign, according to case (iii) above. For future reference, we state this here in more general form.  
\begin{equation}\label{EQ: Angular convention}
    \text{sgn}\big(i\epsilon(x)\big) = \begin{cases}
        +i\epsilon & \text{if} \ x \ \text{corresponds to an initial point of topology change} \\
        -i\epsilon & \text{if} \ x \ \text{corresponds to a final point of topology change} 
    \end{cases}
\end{equation}

There is an obvious and intentional similarity between our angular prescription and that of Sorkin/Louko \cite{louko1997complex}. Perhaps of interest is the difference in overall value of the resulting action -- in \cite{louko1997complex} the action of the Hausdorff Trousers space is computed to be $\frac{2\pi i}{\kappa}$ (with appropriate sign convention), and here, we obtain an overall strength of $\frac{\pi i}{\kappa}$. Thus, if simultaneously included within the same path integral, it appears as though the non-Hausdorff Trousers space will enjoy a weaker suppression factor relative to its Hausdorff counterpart.

\subsection{More $S^1$ Transitions}

As mentioned in the introduction, a generally unsung feature of Sorkin's angular convention is that any more transitions other than the Trousers spaces will yield a further dampening within the path integral. Indeed, adding an extra genus (which topologically amounts to taking a connected sum with the torus) changes the Euler characteristic by: 
$$  \chi(M \# T^2) = \chi(M) + \chi(T^2) - \chi(S^2) = \chi(M) - 2,      $$
so adding any extra genera to the bulk of a trousers space will continually decrease its total contribution to the path integral. In this sense, according to the same-sign convention for $i\epsilon$, any more elaborate branching will be further suppressed in the path integral. Alternatively, if we were to pick the other sign of the regulator, then we are left with the observation that transitions from $\Sigma_1$ to $\Sigma_2$ with arbitrarily-many genera will have the highest probability of occurring. With this in mind, it makes sense to pick the sign that entails the suppression of these higher-genus transitions. \\

In the non-Hausdorff case, we still have this genus complication, as well as the possibility that the topology of $\Sigma_1$ may change from multiple spacelike-separated sources. We will now treat these two possible complications separately. 

\subsubsection{Transitions with extra pointlike sources}
For a collection $p_1, \cdots, p_n$ of spacelike separated points on the cylinder, we may construct a non-Hausdorff manifold by gluing two cylinders together along the open subset $$B := M_1 \Big{\backslash} \bigcup_{\alpha = 1}^n J^+(p_\alpha)$$
by taking $f$ to be the identity map. The result will be a manifold that is homeomorphic to the non-Hausdorff trousers space $\textbf{T}$, but not isometric to it. The only difference is with the causal structure of the Hausdorff-violating submanifold $\textsf{Y}$ -- we will obtain a null boundary with a sort-of ``zigzag" structure, as depicted in Figure \ref{FIG: A-bar for multiple points}. \\

When evaluating the action for this spacetime, we are again forced to consider some sort of angular convention for the turning angles of the extra null boundary $\textsf{Y}$. In this case, consistently applying the angular convention \eqref{EQ: Angular convention} will force an additional suppression of these spaces relative to $\textbf{T}$. Precisely, if $\textbf{T}_n$ is the non-Hausdorff trousers space arising from $n$-many pointlike sources, then under our sign convention $\mathcal{S}(\textbf{T}_n)$ will take value $n\pi i/\kappa$. Thus our more-elaborately branching models will naturally enjoy a larger suppression in the path integral \eqref{EQ: schematic Lorentzian path integral}. 

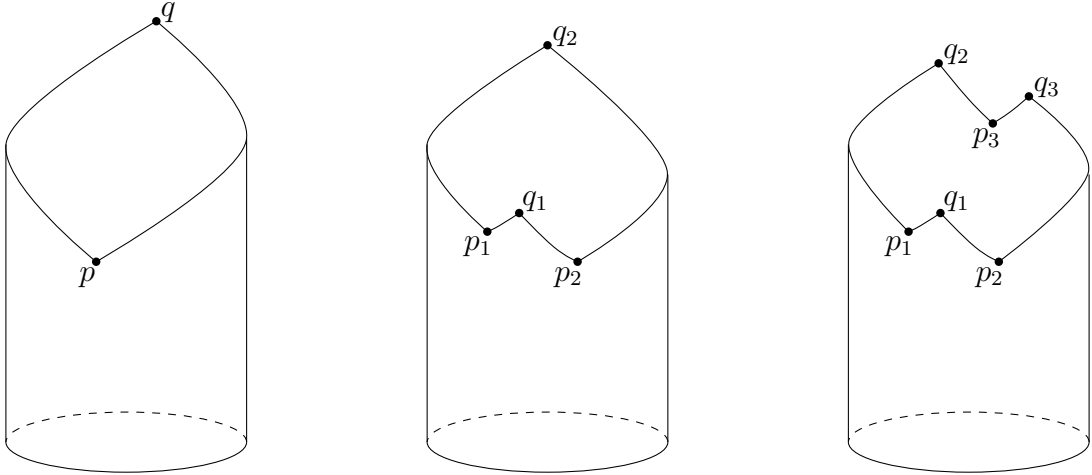
\begin{figure}
    \centering
\begin{tikzpicture}[scale=0.8]
    \draw (2,0) arc
	[start angle=360,	end angle=180, x radius=20mm, y radius =5mm] ;
\draw[dashed] (2,0) arc
	[start angle=0,	end angle=180, x radius=20mm, y radius =5mm] ;

 \draw[] (-2,0)--(-2,4.95);
 \draw[] (2,0)--(2,5.05);
\fill[] (-0.5,3) circle (0.7mm);
\fill[] (0.5,7) circle (0.7mm);

\node[] at (-0.65, 2.75) {$p$};
\node[] at (0.7, 7.15) {$q$};

\draw[] plot[smooth, tension=0.6] coordinates{(-0.5,3) (1.99,5) (0.5,7)};
\draw[] plot[smooth, tension=0.6] coordinates{(-0.5,3) (-1.99,5) (0.5,7)};


\draw (9,0) arc
	[start angle=360,	end angle=180, x radius=20mm, y radius =5mm] ;
\draw[dashed] (9,0) arc
	[start angle=0,	end angle=180, x radius=20mm, y radius =5mm] ;

 \draw[] (5,0)--(5,4.95);
 \draw[] (9,0)--(9,4.45);
 
\fill[] (7.5,3) circle (0.7mm);
\fill[] (6,3.5) circle (0.7mm);
\fill[] (6.53,3.81) circle (0.7mm);
\fill[] (7,6.6) circle (0.7mm);

\node[] at (7.35, 2.75) {$p_2$};
\node[] at (6.8,4) {$q_1$};
\node[] at (5.85,3.25) {$p_1$};
\node[] at (7.3,6.75) {$q_2$};

\draw[] plot[smooth, tension=0.6] coordinates{(6,3.5) (5.01,5) (7,6.6)};
\draw[] plot[smooth, tension=0.8] coordinates{(6,3.5) (6.2, 3.6) (6.53,3.81)};
\draw[] plot[smooth, tension=0.6] coordinates{(7.5,3) (8.99,4.5) (7,6.6)};
\draw[] plot[smooth, tension=0.8] coordinates{(7.5,3) (7.1, 3.25) (6.53,3.81)};


\draw (16,0) arc
	[start angle=360,	end angle=180, x radius=20mm, y radius =5mm] ;
\draw[dashed] (16,0) arc
	[start angle=0,	end angle=180, x radius=20mm, y radius =5mm] ;

 \draw[] (12,0)--(12,4.95);
 \draw[] (16,0)--(16,4.45);
 
\fill[] (14.5,3) circle (0.7mm);
\fill[] (13,3.5) circle (0.7mm);
\fill[] (13.53,3.81) circle (0.7mm);

\fill[] (13.5,6.3) circle (0.7mm);
\fill[] (14.4,5.3) circle (0.7mm);
\fill[] (15,5.75) circle (0.7mm);

\node[] at (14.35, 2.75) {$p_2$};
\node[] at (13.8,4) {$q_1$};
\node[] at (12.85,3.25) {$p_1$};

\node[] at (14.3,5.05) {$p_3$};
\node[] at (13.8,6.45) {$q_2$};
\node[] at (15.3,5.9) {$q_3$};

\draw[] plot[smooth, tension=0.6] coordinates{(13,3.5) (12.01,5) (13.5,6.3)};
\draw[] plot[smooth, tension=0.8] coordinates{(13,3.5) (13.2, 3.6) (13.53,3.81)};
\draw[] plot[smooth, tension=0.6] coordinates{(14.5,3) (15.99,4.5) (15,5.75)};
\draw[] plot[smooth, tension=0.8] coordinates{(14.5,3) (14.1, 3.25) (13.53,3.81)};

\draw[] plot[smooth, tension=0.8] coordinates{(14.4,5.3) (14.7, 5.5) (15,5.75)};
\draw[] plot[smooth, tension=0.8] coordinates{(14.4,5.3) (14, 5.7) (13.5,6.3)};
 
\end{tikzpicture}
\caption{The region $\overline{A}$ (left), together with more elaborate models. Based on Euler characteristic alone, there is no distinction between these three spaces. However, due to our angular convention \eqref{EQ: Angular convention} the total action will increase in multiples of $\pi i$. }
    \label{FIG: A-bar for multiple points}
\end{figure}

\subsubsection{Transitions with extra genera}
It is also possible that the non-Hausdorff trousers space may contain extra genera in the bulk, and these ought to be accounted for within the theory. In this situation we are combining both Morse-theoretic and Penrosian topology change, and there are many possible combinations to consider. Perhaps much more may be said about this particular regime, and there may well be limiting cases of interest, for instance, a non-Hausdorff branching point occurring precisely at the crotch singularity of a Trousers space. For simplicity's sake, we will assume that the future null cones of any pointlike sources of topology change do not contain any critical points such as the crotch singularity. Moreover, we will assume that $\textbf{M}$ is a non-Hausdorff manifold built from Hausdorff manifolds $M_1$ and $M_2$ glued along order-preserving isometries, so that the Gauss-Bonnet theorem of \ref{THM: Lorentzian non-Hausdorff Gauss-Bonnet} and the causal properties of Section 4.1 are still maintained.\footnote{Note that the desired poset structure and  global time function still exist, provided that we glue along isometries that preserve time-orientation. In such a situation, the causal orderings of the $M_i$ will be preserved under the canonical maps $\phi_i$, and we may transfer the global time functions of the $M_i$ to $\textbf{M}$ using Theorem \ref{THM: sections are fibre product}. } \\

We will now rewrite the Lorentzian non-Hausdorff action in such a manner that we obtain a purely topological piece, as well as a piece that spoils the Gauss-Bonnet theorem. The former will arise from the action evaluated with the global $+i\epsilon$ angular convention, while the latter will encode the effect of switching to our chosen angular convention \eqref{EQ: Angular convention}. \\

Let $\tilde\theta_{ext}$ denote the turning angles of $\textsf{Y}$ that would follow from the $+i\epsilon$ convention, and let $\theta_{ext}$ denote the angles that follow from our convention \eqref{EQ: Angular convention}. We can then write the $\textsf{Y}$ boundary term as:
\begin{align*}
    \int_{\textsf{Y}} k d\gamma & = \sum \int_{\gamma} k d\gamma + \sum \theta_{ext} \\
    & = \left( \sum \int_{\gamma} k d\gamma  + \sum\tilde{\theta}_{ext}\right) + \left( \sum \theta_{ext} - \sum \tilde{\theta}_{ext}  \right).
\end{align*}
Now, at each corner, the quantity $\theta_{ext} - \tilde{\theta}_{ext}$ is either $0$ or $-i\pi$, depending on whether the corner corresponds to an initial or final point of topology change, respectively. Thus, we have:
\begin{align*}
    \int_{\textsf{Y}} k d\gamma 
    & = \left( \sum \int_{\gamma} k d\gamma  + \sum\tilde{\theta}_{ext}\right) -n\pi i \ ,
\end{align*}
where $n$ is the number of pointlike sources of topology change. Plugging this into the action and using the Gauss-Bonnet theorem (which applies for the angles $\tilde\theta_{ext}$), we get:
\begin{align*}
  \mathcal{S}(\textbf{M}, \textbf{g}) &= \frac{1}{2\kappa}\int_{\textbf{M}} R dA + \frac{1}{\kappa}\int_{\partial \textbf{M}} k d\gamma - \frac{1}{\kappa}\int_{\textsf{Y}} k d\gamma \\
  &= \frac{1}{\kappa}\left(\frac{1}{2}\int_{\textbf{M}} R dA + \int_{\partial \textbf{M}} k d\gamma 
    - \sum \int_{\gamma} k d\gamma - \sum \tilde\theta_{ext}\right) + \frac{n\pi i}{\kappa} \\
  &= -\frac{2\pi i}{\kappa} \chi(\textbf{M}) + \frac{n \pi i}{\kappa} \ .
\end{align*}
We see that in distinction to our Euclidean discussion of Section 4.2, our convention \eqref{EQ: Angular convention} for Lorentzian turning angles at corners naturally introduces a term that suppresses non-Hausdorff topology changes, in addition to the usual suppression of Hausdorff ones (i.e. of higher genera) via the Euler characteristic. 

\section{Conclusion}

In this paper we have provided a basic analysis of non-Hausdorff transitions between various copies of the circle, in a manner consistent with the original idea of Penrose \cite{penrose1979singularities}. According to the colimit constructions of Section 2, we saw that non-Hausdorff manifolds may be readily endowed with a smooth structure, as well as all tensor fields that one may require in differential geometry. All of these notions were defined without issue according to local definitions together with consistency conditions on the map $f:A\rightarrow M_2$ which provided the specifics of the gluing construction. \\

A central issue for non-Hausdorff manifolds is the passing from local to global data. This manifests in the non-existence of arbitrary partitions of unity (cf. Theorem \ref{THM: no partitions of unity}), and in the inability to use the usual notion of integration. However, we circumvented this issue and saw in the integral formula of Theorem \ref{THM: Integration formula} that the global integral of a compactly-supported top form on a non-Hausdorff manifold satisfies a particular subadditivity formula. The formula \ref{THM: Integration formula} is almost identical to the standard subadditivity for Hausdorff manifolds, except that it required the crucial inclusion of the extra boundary term of $\partial \overline{A}$. According to the colimit construction of Section 2, this term integrates the extra piece of the Hausdorff-violating submanifold sat inside $\textbf{M}$. \\

Although innocuous from a measure-theoretic perspective, the inclusion of this ``internal boundary" term into the global integral had some important consequences for the non-Hausdorff Gauss-Bonnet theorem. Our main observation of Section 3 was an extension of the usual Gauss-Bonnet formula into the non-Hausdorff Lorentzian setting. Theorem \ref{THM: Lorentzian non-Hausdorff Gauss-Bonnet} ultimately showed that the total scalar curvature alone that does not equate to the Euler characteristic, but instead requires the extra integral counterterm, which in this context may be interpreted as the geodesic curvature of the extra Hausdorff-violating submanifold sitting inside the space. \\

As we saw in Section 4, this Gauss-Bonnet theorem ultimately suggests that there is no inherent way to guarantee the suppression of the non-Hausdorff trousers space within a Euclidean path integral that sums over topologies. However, in Lorentzian signature we had the additional subtlety that now, in order to properly evaluate the action, we needed to compute the particular turning angles between adjacent null geodesics. Given that each of these turning angles is subject to its own convention for Lorentzian angles, we argued that there indeed exists a method for suppression of non-Hausdorff trousers spaces in Lorentzian signature: we needed to intentionally spoil the Gauss-Bonnet theorem by selecting opposite sign conventions depending on the nature of these turning angles. Once this was done, the gravitational action took the correct sign in Lorentzian signature, which accounted the desired suppression. \\

In distinction to the Hausdorff case, there are two possible types of further branching: the addition of extra sources of pointlike change, and the inclusion of Hausdorff branching as per usual. In either case, we saw that more elaborate branching caused a further dampening in the path integral. 

\subsubsection*{Future Work}
We will now finish with some discussions of future work. As we saw in Sections 1.4 and 3.1, the non-Hausdorff Trousers space naturally inherits any Lorentzian metric placed on the ordinary cylinder. Moreover, we saw in Section 4.1 that the causal structure arising from the metric $\textbf{g}$ on $\textbf{T}$ is relatively well-behaved, thus on causal grounds alone one may suggest that the non-Hausdorff trousers space should be reasonably included in the schematic path integral of Equation \eqref{EQ: schematic Lorentzian path integral}. However, it seems natural to suggest that a``physically reasonable" spacetime ought to be a suitable background upon which to define both spinors and quantum fields. \\

Regarding possible spin structures: it is well-known in the Hausdorff case that questions regarding the existence and uniqueness of spin structures on a given manifold may be best articulated within the language of Čech cohomology. Specifically, it can be shown that a given (Hausdorff) manifold admits a spin structure if and only the second Steifel-Whitney number $w_2$ vanishes. In addition, it can be shown that the number of inequivalent spin structures on a given manifold may be classified with the Čech cohomology group $\check{H}^1(M, \mathbb{Z}_2)$. For the non-Hausdorff case, the Čech cohomology is partially understood \cite{o2023vectorbun}, though a full theory of non-Hausdorff spin geometry is yet to exist. \\

Regarding quantum fields: in the non-Hausdorff case the causal irregularities of the crotch singularity do not exist, so the abnormalities present in the Trousers space (discussed in \cite{anderson1986does, manogue1988trousers, buck2017sorkin}) will probably not exist either. Moreover, it appears as though the categorical phrasing of non-Hausdorff manifolds detailed in Section 2 may readily be applied in conjunction with the locally-covariant algebraic quantum field theory introduced in \cite{brunetti2003generally} to construct at least \textit{some} class of quantum fields. However, it is not clear what the exact properties of such fields ought to be, and this should be an interesting avenue of inquiry. \\

Broadly speaking, our discussion in this paper suggest that Penrose's topology changing spacetimes may be appropriately furnished with the mathematical structures required for an inquiry into physics. From a causal perspective, these non-Hausdorff spacetimes appear to be better behaved than the ordinary Trousers space, ultimately due to the absence of any metric singularities. This good behaviour suggests that non-Hausdorffness may be a more appropriate model of topology change in Lorentzian signature. Once the spin geometry and quantum field theory of non-Hausdorff manifolds is well-understood, it would be very interesting to study the inclusion of these spacetimes within pre-existing theories that sum over topologies. In particular, our non-Hausdorff transitions may provide a geometric realization of interacting (i.e. splitting and joining) strings within Lorentzian signature. 

\section*{Acknowledgements}

This paper was made possible by funding received by the Quantum Gravity unit at the Okinawa Institute of Science and Technology. We'd like to extend thanks to Timothy Budd, Renate Loll, Annegret Burtscher, Seth Asante, Bianca Dittrich, and Daniele Oriti for the helpful discussions. 

\printbibliography

\end{document}